\title{Improved Space efficient linear time algorithms for BFS, DFS and applications\thanks{Some of these results were announced in preliminary form in the proceedings of 22nd International Computing and Combinatorics Conference (COCOON 2016), Springer LNCS volume 9797, pages 119-130 \cite{BanerjeeC016}, and 27th International Symposium on Algorithms and Computation (ISAAC 2016), LIPIcs, volume 64, pages 22:1--22:13 \cite{Chakraborty0S16}. More specifically, this paper contains the linear time algorithms for DFS and its applications, announced in~\cite{BanerjeeC016} and~\cite{Chakraborty0S16}, along with the linear time algorithms for BFS and a few other graph problems, announced in~\cite{BanerjeeC016}.}}
\author{Niranka Banerjee\inst{1}, Sankardeep Chakraborty\inst{1}, Venkatesh Raman\inst{1}, Srinivasa Rao Satti\inst{2}}
\institute{The Institute of Mathematical Sciences, HBNI\\
CIT Campus, Taramani, Chennai 600 113, India\\
\email {nirankab|sankardeep|vraman@imsc.res.in}\\
\and
Seoul National University,\\
1 Gwanak-ro, Gwanak-gu, Seoul, South Korea\\
\email {ssrao@cse.snu.ac.kr}\\
}
\date{}
\begin{document} \maketitle
\begin{abstract}
Research on space efficient graph algorithms, particularly for $st$-connectivity, has a long history including the celebrated polynomial time, $O(\lg n)$ bits \footnote{We use $\lg$ to denote logarithm to the base $2$.} algorithm in undirected graphs by Reingold (JACM 2008), and polynomial time, $n/2^{\Theta(\sqrt{\lg n})}$ bits algorithm in directed graphs by Barnes et al. (SICOMP 1998). Recent works by Asano et al. (ISAAC 2014) and Elmasry et al. (STACS 2015), reconsidered classical fundamental graph algorithms focusing on improving the space complexity. Elmasry et al. gave, among others, an implementation of breadth first search (BFS) in a graph $G$ with $n$ vertices and $m$ edges, taking the optimal $O(m+n)$ time using $O(n)$ bits of space improving the na\"{\i}ve $O(n \lg n)$ bits implementation. Similarly, Asano et al. provided several space efficient implementations for performing depth first search (DFS) in a graph $G$. We continue this line of work focusing on improving the space requirement for 
implementing a few classical graph algorithms.

Our first result is a simple data structure that can maintain any subset $S$ of a universe of $n$ elements using just $n+o(n)$ bits and supports in constant time, apart from the standard insert, delete and membership queries, the operation {\it findany} that finds and returns any element of the set (or outputs that the set is empty). It can also enumerate all elements present currently in the set in no particular order in $O(k+1)$ time where $k$ is the number of elements currently belonging to the set. While this data structure supports a weaker set of operations than that of Elmasry et al. (STACS 2015), it is simple, more space efficient and is sufficient to support a BFS implementation optimally in $O(m+n)$ time using at most $2n+o(n)$ bits. Later, we further improve the space requirement of BFS to at most $n \lg 3+o(n)$
bits albeit with a slight increase in running time to $O(m \lg n f(n))$ time where $f(n)$ is any extremely slow growing function of $n$, and the $o$ term in the space is a function of $f(n)$.

We demonstrate one application of our data structure by developing another data structure using it that can represent a sequence of $n$ non-negative integers $x_1, x_2, \ldots x_n$ using at most $\sum_{i=1}^n x_i + 2n + o(\sum_{i=1}^n x_i+n)$ bits and, in constant time, determine whether the $i$-th element is $0$ or decrement it otherwise. We use this data structure to output, in $O(m+n)$ time and using $O(m+n)$ bits of space, the vertices of 
\begin{itemize}
\item
 a directed acyclic graph in topological sorted order, and
\item
an undirected graph with degeneracy $d$ in degeneracy order.
\end{itemize}
These results improve the space bounds of earlier implementations at least for sparse graphs while maintaining the same linear running time. We also discuss a time-space tradeoff result for finding a minimum weight spanning tree of a weighted (bounded by polynomial in $n$) undirected graph using $n+O(n/f(n))$ bits and $O(m\lg n f(n))$ time, for any function $f(n)$ such that $1 \leq f(n) \leq n$.

For DFS in a graph $G$, we provide an implementation taking $O(m+n)$ time and $O(n \lg m/n)$ bits. This partially answers at least for sparse graphs, a question asked by Asano et al. (ISAAC 2014) whether DFS can be performed in $O(m+n)$ time and using $O(n)$ bits, and also simultaneously improves the DFS result of Elmasry et al. (STACS 2015). Using our DFS algorithm and other careful implementations, we show how one can also test for biconnectivity, $2$-edge connectivity, and find cut vertices and bridges of a given undirected graph within the same time and space bounds; earlier classical linear time algorithms for these problems used $\Omega (n\lg n)$ bits of space.
\end{abstract}

\section{Introduction}

Motivated by the rapid growth of huge data set (``big data''), algorithms that utilize space efficiently are becoming increasingly important than ever before. Another reason for the importance of space efficient algorithms is the proliferation of specialized handheld devices and embedded systems that have a limited supply of memory. 
Even if mobile devices and embedded systems are designed with large supply of memory, it might be useful to restrict the number of write operations specifically for two reasons. One being writing into flash memory is a costly operation in terms of speed and time, and, secondly, it reduces the longevity of such memory. Hence, there is a growing body of work that considers algorithms that do not modify the input and use only a limited amount of work space, and following the recent trend, this paper continues this line of research for fundamental graph algorithms.

\subsection{Model of Computation} 
\label{model1}
We assume that the input graph is given in a read-only memory (and so cannot be modified). If an algorithm must do some outputting, this is done on a separate write-only memory. When something is written to this memory, the information cannot be read or rewritten again. So the input is ``read only'' and the output is ``write only''. In addition to the input and the output media, a limited random-access workspace is available. The data on this workspace is manipulated wordwise as on the standard word RAM, where the machine consists of words of size $\Omega (\lg n)$ bits, and any logical, arithmetic, and bitwise operations involving a constant number of words take a constant amount of time. We count space in terms of the number of bits used by the algorithms in workspace. In other words, storing the input and output is for free, but the input/output space cannot be used by the computation for any other purpose. This model is called the {\it register input model}, and it was introduced by Frederickson \cite{Frederickson87} while studying some problems related to sorting and selection. 

While designing space efficient algorithms in read-only memory model, the specific details of the input graph representation are of great significance as we can neither modify the input nor copy the whole input in workspace. Thus space-efficient algorithms~\cite{Chakraborty0S16,ElmasryHK15,HagerupK16,KammerKL16} assume more powerful form of input representation than what is typically assumed in classical settings. Here we assume that the input graph $G$ is represented using the standard {\it adjacency list along with cross pointers}, i.e., for undirected graphs given a vertex $u$ and the position in its list of a neighbor $v$ of $u$, there is a pointer to the position of $u$ in the list of $v$. In case of directed graphs, for every vertex $u$, we have a list of out-neighbors of $u$ and a list of in-neighbours of $u$. And, finally we augment these two lists for every vertex with cross pointers, i.e., for each $(u,v)\in E$, given $u$ and the position of $v$ in out-neighbors of $u$, there is a pointer to the 
position of $u$ in in-neighbors of $v$. This form of input graph representation was introduced recently in~\cite{ElmasryHK15} and used subsequently in~\cite{BanerjeeC016,HagerupK16,KammerKL16} to design various other space efficient graph algorithms. We note that some of our algorithms will work even with less powerful and the more traditional {\it adjacency list} representation. We specify these details regarding the exact form of input graph representation at the respective sections while describing our algorithms. We use $n$ and $m$ to denote the number of vertices and the number of edges respectively, in the input graph $G$. Throughout the paper, we assume that the input graph is a connected graph, and hence $m \ge n-1$. 


\subsection{Our Results and organization of the paper}
Asano et al.~\cite{AsanoIKKOOSTU14} show that DFS of a directed or undirected graph $G$ on $n$ vertices and $m$ edges can be performed using $n+o(n)$ bits and (an unspecified) polynomial time. Using $2n+o(n)$ bits, they bring down the running time to $O(mn)$ time, and using a larger $O(n)$ bits, the running time of their algorithm is $O(m \lg n)$. In a similar vein,  
\begin{itemize}
\item
we show in Section \ref{BFS} that the vertices of a directed or undirected graph can be listed in BFS order using $n \lg 3+o(n)$
bits and $O(m f(n) \lg n)$ time where $f(n)$ is any (extremely slow-growing) function of $n$ i.e., $\lg^* n$ (the $o$ term in the space is a function of $f(n)$), while the running time can be brought down to the optimal $O(m+n)$ time using $2n+o(n)$ bits.

En route to this algorithm, we develop in Section~\ref{datastructure},
\item
a data structure that maintains a set of elements from a universe of size $n$, say $[1..n]$, using $n+o(n)$ bits to support, apart from the standard insert, search and delete operations, the operation {\it findany} of finding an arbitrary element of the set, and returning its value all in constant time. It can also output all elements of the set in no particular order in $O(k+1)$ time where $k$ is the number of elements currently belonging to the set.

Our structure gives an explicit implementation, albeit for a weaker set of operations than that of Elmasry
et al. [Lemma 2.1,~\cite{ElmasryHK15}] whose space requirement was $cn+o(n)$ bits for
an unspecified constant $c > 2$;\footnote{Since our initial version of the paper, Hagerup and Kammer~\cite{HagerupK16} have independently reported a similar structure with $n + o(n)$ bits for the data structure. See Section~\ref{relwork} for more details.}
furthermore, our structure is simple and is sufficient to 
implement BFS space efficiently, improving by a constant factor of their BFS implementation keeping the running time same.
\end{itemize}
In what follows, in Section \ref{super_linear_bfs}, we improve the space for BFS further at the cost of
slightly increased runtime. We also provide a similar tradeoff for the minimum spanning tree problem in Section \ref{minspan}. In particular, we provide an implementation to find a minimum weight spanning tree in a weighted
undirected graph (with weights bounded by polynomial in $n$) using $n+O(n/f(n))$ bits and $O(m\lg n f(n))$ time, for any function $f(n)$ such that $1 \leq f(n) \leq n$. While this algorithm is similar in spirit to that of Elmasry et al. \cite{ElmasryHK15} which works in $O(m \lg n)$ time using $O(n)$ bits or $O(m+n \lg n)$ time using $O(n \lg (2+(m/n)))$ bits, we work out the constants in the higher order term for space, and improve them slightly though with a slight degradation in time.

\begin{itemize}
\item
Using our data structure, in Section~\ref{topo} 
we develop another data structure to represent a sequence $x_1, x_2, \ldots x_n$ of $n$ non-negative integers using 
$m+2n+ o(m+n)$ bits where $m = \sum_{i=1}^n x_i$. In this, we can determine whether the $i$-th element is $0$ and if not, decrement it, all in constant time.
In contrast, the data structure claimed (without proof) in~\cite{ElmasryHK15} can even change (not just decrement) or access the elements, but in constant {\it amortized} time. However, their structure requires an $O(\lg n)$ limit on the $x_i$ values while we pose no such restriction.
Using this data structure, in the same section, 
\begin{itemize}
\item
we give an implementation of computing a topological sort
of a directed acyclic graph in $O(m+n)$ time and $O(m+n)$ bits of space. We can even detect if the graph is not acyclic within the same time and space bounds. This implementation, contrasts with an earlier bound of $O(m+n)$ time and $O(n \lg \lg n)$ space~\cite{ElmasryHK15}, and is more space efficient for sparse directed graphs (that includes those directed graphs whose underlying undirected graph is planar or has bounded treewidth or degeneracy).
\item
A graph has a degeneracy $d$ if every induced subgraph of the graph has a vertex with degree at most $d$ (for example, planar graphs have degeneracy $5$, and trees have degeneracy $1$). An ordering $v_1, v_2, \ldots v_n$ of the vertices in such a graph is a degenerate order if for any $i$, the $i$-th vertex has degree at most $d$ among vertices $v_{i+1}, v_{i+2}, \ldots v_n$. There are algorithms \cite{Bata,EppsteinLS13} that can find the degeneracy order in $O(m+n)$ time using $O(n)$ words. We show that, given a $d$, we can output the vertices of a $d$-degenerate graph in $O(m+n)$ time using $O(m+n)$ bits of space in the degeneracy order. We can even detect if the graph is $d$-degenerate in the process. As $m$ is $O(nd)$, we have an $O(nd)$ bits algorithm which is more space efficient if $d$ is $o(\lg n)$ (this is the case, for example, in planar graphs or trees).
\end{itemize}
\item For DFS,
Asano et al.~\cite{AsanoIKKOOSTU14} showed that DFS in a directed or undirected graph can be performed in $O(m \lg n)$ time and $O(n)$ bits of space, and Elmasry et al. \cite{ElmasryHK15} improved the time to $O(m\lg \lg n)$ time still using $O(n)$ bits of space. We show the following:
\begin{itemize}
\item In Section \ref{dfs1}, we first show that, we can perform DFS in a directed or undirected graph in linear time using $O(m+n)$ bits. This, for example, improves the runtime of the earlier known results for sparse graphs (where $m$ is $O(n)$) while still using the same asymptotic space. Building on top of this DFS algorithm and other observations, we show how to efficiently compute the {\it chain decomposition} of a connected undirected graph. This lets us perform a variety of applications of DFS (including testing biconnectivity and $2$-edge connectivity, finding cut vertices and edges among others) within the same time and space bound. Our algorithms for these applications improve the space requirement (for sparse graphs) of all the previous algorithms from $\Theta (n\lg n)$ bits to $O(m+n)$ bits, preserving the same linear runtime. 
\item in Section \ref{simp-bicon}, for all the problems mentioned above and dealt in Section \ref{dfs1}, we improve the space even further to $O(n \lg (m/n))$ bits keeping the same $O(m+n)$ running time. The space used by these algorithms, for some ranges of $m$ (say $\Theta (n (\lg \lg n)^c$ for some constant $c$), is even better than that of the recent work by Kammer et al.~\cite{KammerKL16}, that computes cut vertices using $O(n + \min \{ m, n\lg \lg n\})$ bits.
\end{itemize}
\end{itemize}

\subsection{Related Work}
\label{relwork}
In computational complexity theory, the constant work-space model is represented by the complexity class L or DLOGSPACE~\cite{AroraB}. There are several important algorithmic results for this class, most celebrated being Reingold's method for checking reachability between two vertices in an undirected graph \cite{Reingold08}. Barnes et al.~\cite{BarnesBRS98} gave a sub-linear space algorithm for directed graph reachability. Recent work has focused on space requirement in special classes of graphs like planar and H-minor free graphs~\cite{AsanoKNW14,ChakrabortyPTVY14}. Generally these algorithms have very large polynomial running time. In the algorithms literature, where the focus is also on improving time, a huge amount of research has been devoted to memory constrained algorithms, even as early as in the 1980s \cite{MunroP80}. Early work on this focused on the selection problem~\cite{ElmasryJKS14,Frederickson87,MunroP80,MunroR96}, but more recently on computational geometry problems~\cite{AsanoBBKMRS14,AsanoMRW11,BarbaKLSS15,DarwishE14,ElmasryK16} and graph algorithms 
\cite{ElmasryHK15,AsanoIKKOOSTU14,BanerjeeCRRS2015,KammerKL16,Chakraborty0S16,HagerupK16}. Regarding the data structure we develop to support {\it findany} operation, Elmasry et al. [Lemma 2.1,~\cite{ElmasryHK15}] state a data structure (without proof) that supports all the operations i.e., insert, search, delete and findany (they call it $some\_ id$) among others, in constant time. But their data structure takes $O(n)$ bits of space where the constant in the $O$ term is not explicitly stated. Since our initial version of the paper, Hagerup and Kammer \cite{HagerupK16} have independently reported a structure with $n + o(n)$. Though their lower order (the little ``oh'') term $O(n/\lg n)$ is better than ours which is $O(n \lg \lg n/\lg n)$, we believe that our structure is a lot simpler and supports a fewer set of operations, yet sufficient for the space efficient BFS implementation. Furthermore, we do provide a few other applications of our structure also. Recently, Poyias et al.~\cite{ppr} considered the 
problem of compactly representing a rewritable array of bit-strings, and to achieve that they used our findany structure as the main building block in their algorithms. 
Brodal et al.~\cite{BrodalCR96} considered a version of the {\it findany} operation where the goal was to find any element of the set and return its rank (the number of elements smaller than that). For that they gave a non-constant lower bound, though they don't assume that the elements are from a bounded universe. They give a randomized data structure that takes a constant amortized time per operation. However their main objective was to provide time tradeoffs between operations supported by the data structure and they didn't worry about space considerations. We note that this operation and their setup is different from the {\it findany} query we support. 

\subsection{Related Models}
Several models of computation come close to {\it read-only random-access} model-the model we focus on this paper. A single thread common to all of them is that access to the input tape is restricted in some way. In the {\it multi-pass streaming} model \cite{MunroP80} the input is kept in a read-only sequentially-accessible media, and an algorithm tries to optimize on the number of passes it makes over the input. In the {\it semi-streaming} model \cite{Muthukrishnan05}, the elements (or edges if the input is graph) are revealed one by one and extra space allowed to the algorithm is $O(n.polylg(n))$ bits. Observe that, it is not possible to store the whole graph if it is dense. The efficiency of an algorithm in this model is measured by the space it uses, the time it requires to process each edge and the number of passes it makes over the stream. In the {\it in-place} model \cite{BronnimannC06}, one is allowed a constant number of additional variables, but it is possible to rearrange (and sometimes even modify)
 the input values. In the {\it restore} model \cite{ChanMR14}, one is allowed to modify the input but it has to be brought back to its starting configuration afterwards.

\subsection{Preliminaries}
\paragraph{Representing a Vector} We will use the following theorem from \cite{DodisPT10}:
\begin{theorem}\label{nlgc} \cite{DodisPT10}
On a Word RAM, one can represent a vector $A[1..n]$ of elements from a finite alphabet $\Sigma$ using $ n \lg |\Sigma| + O(\lg^2 n) $ 
bits\footnote{The data structure requires $O(\lg n)$ precomputed word constants, thus the second order $O(\lg^2 n)$ bits.}, such that any element of the vector can be read or written in constant time.
\end{theorem}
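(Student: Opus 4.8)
The plan is to match the information-theoretic bound $\lceil \lg(|\Sigma|^{n})\rceil = \lceil n\lg|\Sigma|\rceil$ bits by regarding the whole vector as a single integer written in base $|\Sigma|$, namely $N=\sum_{i=1}^{n} A[i]\,|\Sigma|^{i-1}\in[0,|\Sigma|^{n})$, while still extracting or overwriting one digit in constant time. The naive alternative---storing each element in its own field of $\lceil\lg|\Sigma|\rceil$ bits---already gives constant-time access but wastes up to one bit per element, i.e.\ as much as $n$ bits, which is far more than the $O(\lg^{2}n)$ redundancy we are after. So the whole difficulty is to pay the rounding loss only once, globally, rather than once per element.

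First I would block the array. Choose a block length $c=\Theta(\lg n/\lg|\Sigma|)$ so that a block of $c$ consecutive elements, read as a base-$|\Sigma|$ integer in $[0,|\Sigma|^{c})$, fits in a single machine word (its value is at most $n^{O(1)}$). With the powers $|\Sigma|^{0},\dots,|\Sigma|^{c}$ precomputed---these are $O(\lg n)$ word constants, contributing $O(\lg^{2}n)$ bits and accounting for the footnote---one can read or replace a single element inside a block using a constant number of multiplications, divisions and remainders, or a small precomputed table. This reduces the problem to storing a sequence of $\lceil n/c\rceil$ \emph{super-digits} over the alphabet $[0,|\Sigma|^{c})$ with constant-time positional access and minimal redundancy.

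The crux is the packing of these super-digits. Storing each in a fixed field of $\lceil c\lg|\Sigma|\rceil$ bits keeps positions trivially computable but reintroduces a one-bit loss per block, i.e.\ $\Theta(n\lg|\Sigma|/\lg n)$ bits---still $\omega(\lg^{2}n)$. To remove this loss I would use a spillover (change-of-base) mechanism: process the super-digits maintaining a bounded carry $s$ from a small range, combine the current value $v\in[0,|\Sigma|^{c})$ with the incoming spillover into $v+|\Sigma|^{c}s$, emit a predetermined number of low \emph{solid} bits into the bitstream, and pass the bounded high part on as the new spillover; since each step emits essentially $\lg$ of its alphabet size, the solid stream has length $\lceil n\lg|\Sigma|\rceil+O(\lg n)$ and only the final $O(\lg n)$-bit spillover is stored explicitly. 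The obstacle is that this coupling makes decoding a block depend, through the chain of carries, on all blocks before it---exactly the global carry propagation inherent to a non-power-of-two base---so a literal left-to-right scheme costs linear time per query, while inserting an explicit pointer or a carry reset at every block (or even every super-block) would cost $\Theta(n\lg|\Sigma|/\lg n)$ bits and defeat the space bound. The real work, following Dodis et al.\ \cite{DodisPT10}, is to arrange the spillovers in a shallow, high-fan-out hierarchy whose internal fields have position-computable widths, so that answering a query walks a root-to-leaf path touching $O(1)$ words and the precomputed tables, yet the only unrecovered rounding is a single spillover at the top. This is the step I expect to be hardest to get right, and it is where the $O(\lg^{2}n)$ term---the tables together with the top-level spillover---is finally absorbed.
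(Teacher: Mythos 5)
The first thing you should know is that the paper does not prove this theorem at all: it is stated in the preliminaries and imported wholesale from Dodis, Patrascu and Thorup \cite{DodisPT10}, with the footnote about $O(\lg n)$ precomputed word constants as the only commentary. So there is no in-paper proof to compare against; your attempt has to be judged against the cited construction itself. Measured that way, your outline points in exactly the right direction and matches the source: blocking the vector into word-sized super-digits using precomputed powers of $|\Sigma|$ (which is indeed where the $O(\lg^2 n)$ term and the footnote come from), observing that fixed-width fields re-introduce a $\Theta(n\lg|\Sigma|/\lg n)$-bit rounding loss, and invoking the spillover (change-of-base) mechanism organized into a shallow, high-fan-out hierarchy is a faithful summary of how \cite{DodisPT10} works.

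But as a proof it stops exactly where the difficulty begins, and you say so yourself: the construction of the hierarchy with position-computable field widths, the argument that the total redundancy telescopes to a single top-level spillover plus tables (rather than accumulating a rounding loss at every internal node), and the verification that a query touches only $O(1)$ words are all deferred to ``following Dodis et al.'' Those three points \emph{are} the theorem; everything before them is routine. The most concrete omission concerns writes: your own left-to-right strawman correctly identifies that carries couple a block to everything after it, but nothing in your sketch explains how the hierarchical arrangement lets one \emph{overwrite} an element while only touching $O(1)$ words --- i.e., why the spillover entering and leaving a node can be kept invariant under local updates, which is the precise property that makes constant-time writes possible. A smaller issue: choosing $c=\Theta(\lg n/\lg|\Sigma|)$ tacitly assumes $\lg|\Sigma|=O(\lg n)$; for larger alphabets the blocking step degenerates ($c<1$) and the spillover machinery must be applied to the digits directly, so the reduction as stated is not fully general (though it covers every alphabet this paper actually uses). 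In short: right approach, same as the cited source, but the central construction is asserted rather than carried out.
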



\paragraph{Rank-Select} We also make use of the following theorem.
\begin{theorem} \cite{Clark96,GuptaHSV07,Munro96}
 \label{staticbit}
We can store a bitstring $O$ of length $n$ with additional $o(n)$ bits such that rank and select operations (defined below) can be supported in $O(1)$ time. Such a structure can also be constructed from the given bitstring in $O(n)$ time.
\end{theorem}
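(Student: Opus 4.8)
The plan is to build, on top of the raw bitstring (call it $B$, of length $n$), two separate auxiliary index structures—one for $\mathrm{rank}$ and one for $\mathrm{select}$—each occupying $o(n)$ bits, together with a handful of universal lookup tables that depend only on $n$ and are shared across the whole structure. Throughout, $\mathrm{rank}(i)$ denotes the number of $1$'s in $B[1..i]$ and $\mathrm{select}(j)$ the position of the $j$-th $1$.

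For $\mathrm{rank}$ I would use the classical two-level counting scheme. Partition $[1..n]$ into \emph{superblocks} of $\lceil \lg^2 n\rceil$ consecutive positions, and each superblock into \emph{blocks} of $\lceil \frac{1}{2}\lg n\rceil$ positions. In an array indexed by superblock, store the absolute rank at its left boundary; since there are $O(n/\lg^2 n)$ superblocks and each count is at most $n$, this costs $O(n/\lg n)$ bits. In an array indexed by block, store the rank relative to the start of its superblock; each such value is at most $\lg^2 n$ and so fits in $O(\lg\lg n)$ bits, and with $O(n/\lg n)$ blocks this is $O(n\lg\lg n/\lg n)=o(n)$ bits. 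To finish a query inside a block I precompute a table $T[p,k]$ giving the number of $1$'s among the first $k$ bits of the bit pattern $p$; since a block has only $\lceil\frac{1}{2}\lg n\rceil$ bits there are $O(\sqrt n)$ patterns, so the table has $O(\sqrt n\,\lg n)$ entries of $O(\lg\lg n)$ bits and occupies $o(n)$ bits. A $\mathrm{rank}(i)$ query then adds the superblock count, the block count, and one table lookup, all in $O(1)$ time.

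The harder half is $\mathrm{select}$, and this is where I expect the main difficulty, since a symmetric block directory does not by itself yield constant time. Here I would follow the standard three-level construction. First record the position of every $(\lg^2 n)$-th $1$, which splits the set of $1$'s into groups of $\lg^2 n$ consecutive ones. Call a group \emph{sparse} if it spans at least $\lg^4 n$ positions and \emph{dense} otherwise. A sparse group contains only $\lg^2 n$ ones yet occupies at least $\lg^4 n$ positions, so there are at most $n/\lg^4 n$ sparse groups; for each we may simply write down the positions of all its $1$'s explicitly at a cost of $O(\lg^2 n \cdot \lg n)$ bits, totalling $O(n/\lg n)=o(n)$ bits overall. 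A dense group spans fewer than $\lg^4 n$ positions, so a second, identical level of subdivision can store offsets relative to the group start that are now only $O(\lg\lg n)$ bits wide, and the recursion bottoms out in a universal lookup table over bit patterns of length $\Theta(\lg n)$, keeping both the auxiliary arrays and the tables within $o(n)$ bits. A careful tally of these cases is the technical heart of the argument.

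Finally, the $O(n)$ construction time follows because every cumulative count—at the superblock, block, and select-directory levels—is obtained by a single left-to-right scan of $B$ maintaining running counters, and each universal table, having only $O(\sqrt n\,\mathrm{polylog}\,n)$ entries and being independent of $B$, can be filled in $o(n)$ time. Hence the whole index is built in $O(n)$ time, adds $o(n)$ bits, and answers both $\mathrm{rank}$ and $\mathrm{select}$ in $O(1)$ time, as claimed. Since this is a well-known result, I would defer the exact constant-juggling in the $\mathrm{select}$ analysis to the cited works.
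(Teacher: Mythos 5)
Your proposal is correct: the two-level superblock/block counting scheme for rank and the three-level sparse/dense directory for select are exactly the classical Clark--Munro construction, which is what the references cited for this theorem contain. The paper itself states this result purely as a preliminary with citations to \cite{Clark96,GuptaHSV07,Munro96} and gives no proof of its own, so your argument simply fills in (correctly, in outline) the construction the paper takes as a black box.
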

Here the rank and select operations are defined as following:
\begin{itemize}
 \item $rank_a(O,i)$ = number of occurrences of $a\in \{0,1\}$ in $O[1,i]$, for $1\leq i\leq n$;
 \item $select_a(O,i)$ = position in $O$ of the $i$-th occurrence of $a\in \{0,1\}$.
\end{itemize}

\paragraph{Graph theoretic terminology}\label{sec:terms}
Here we collect all the necessary graph theoretic definitions that will be used throughout the paper. In an undirected graph $G$, a cut vertex is a vertex $v$ that when removed (along with its incident edges) from a graph creates more (than what was there before) components in the graph. A (connected) graph with at least three vertices is biconnected (also called $2$-connected or $2$-vertex connected in the literature) if and only if it has no cut vertex. A biconnected component is a maximal biconnected subgraph. These components are attached to each other at cut vertices. Similarly in an undirected graph $G$, a bridge is an edge that when removed (without removing the vertices) from a graph creates more components than previously in the graph. A (connected) graph with at least two vertices is $2$-edge-connected if and only if it has no bridge. A $2$-edge connected component is a maximal $2$-edge connected subgraph. A graph has a degeneracy $d$ if every induced subgraph of the graph has a vertex with degree 
at most $d$. An ordering $v_1, v_2, \ldots v_n$ of the vertices in such a graph is a degenerate order if for any $i$, the $i$-th vertex has degree at most $d$ among vertices $v_{i+1}, v_{i+2}, \ldots v_n$. A topological sort or topological ordering of a directed acyclic graph is a linear ordering of its vertices such that for every directed edge $(u,v) \in E$ from vertex $u$ to vertex $v$, $u$ comes before $v$ in the ordering. A minimum spanning tree (MST) or minimum weight spanning tree is a subset of the edges of a connected, edge-weighted undirected graph that connects all the vertices together, without any cycles and with the minimum possible total edge weight. That is, it is a spanning tree whose sum of edge weights is as small as possible.

\section{Maintaining dictionaries under findany operation}
\label{datastructure}
We consider the data structure problem of maintaining a set $S$ of elements from $\{1, 2, \ldots n\}$ to support the following operations in constant time.
\begin{itemize}
\item {\it insert ($i$)}: Insert element $i$ into the set. 
\item {\it search ($i$)}: Determine whether the element $i$ is in the set.
\item {\it delete ($i$)}: Delete the element $i$ from the set if it exists in the set.
\item {\it findany}: Find any element from the set and return its value. If the set is empty, return a NIL value.
\end{itemize}
Already there exist several solutions for this problem in the data structure literature, but not with the exact time and space bound which we are looking for. In what follows we mention some of these results. It is trivial to support the first three operations in constant time using $n$ bits of a characteristic vector. We could support the {\it findany} operation by keeping track of one of the elements, but once that element is deleted, we need to find another element to answer a subsequent {\it findany} query. This might take $O(n)$ time in the worst case. But this is easy to support in constant time if we have the elements stored in a linked list which takes $O(n \lg n)$ bits. One could also use the dynamic rank-select structure~\cite{HonSS11,RamanRR01} where insert, delete and findany operations take $O(\lg n/\lg \lg n)$ time, and search takes $O(1)$ time. Another approach would be to the classical balanced binary search tree to support these operations. This can be slightly improved further by using the 
van Emde Boas tree. One can improve upon the results for balanced binary search trees and van Emde Boas trees by using the `dynamic range report' structure of Mortensen et al.~\cite{MortensenPP05}, though it still lacks the time and space bound we want here. Our main result in this section is that the {\it findany} operation, along with the other three, can be supported in constant time using $o(n)$ additional bits. We provide a comparison of our findany data structure with previous results in the table below.

\captionof{table}{{\it In the table below, we denote the size of the universe by $n$, and $m$ denotes the number of elements currently present in the data structure. By keeping an additional characteristic bit vector of size $n$, the time required for the search operation can be improved to $O(1)$ time for Balanced BST, vEB and DRR at the expense of extra space. The last column denotes if the update bounds of the corrspeonding data structures are worst case (W) or in expectation (E).}}
\vspace{.3cm}
\begin{tabular}{|l|c|c|c|c|c|c|}
\hline
Data Structure & Insert & Search & Delete & Finadany & Space (in bits) & W/E\\
\hline
  Characteristic Vector  & $O(1)$  & $O(1)$  & $O(1)$ & $O(n)$ & $n$ & W\\
  Balanced Binary Search Tree (BST) & $O(\lg n)$ & $O(\lg n)$ & $O(\lg n)$ & $O(1)$ & $O(m \lg n)$ & W\\
  Dynamic Rank/Select~\cite{HonSS11,RamanRR01} & $O(\frac{\lg n}{\lg \lg n})$
 & $O(1)$ & $O(\frac{\lg n}{\lg \lg n})$ & $O(\frac{\lg n}{\lg \lg n})$ &
$n+o(n)$ & W\\
van Emde Boas tree (vEB)~\cite{Boas75} & $O(\lg \lg n)$ & $O(\lg \lg n)$ & $O(\lg \lg n)$ & $O(\lg \lg n)$ & $O(m \lg n)$ & E \\ 
Dynamic Range Report (DRR)~\cite{MortensenPP05} & $O(\lg \lg n)$ & $O(\lg \lg \lg n)$ & $O(\lg \lg n)$ & $O(\lg \lg \lg n)$ & $O(m \lg n)$ & E\\
Findany (Theorem \ref{maindsbinary} below and~\cite{HagerupK16}) & $O(1)$  & $O(1)$  & $O(1)$ & $O(1)$ & $n+o(n)$ & W\\ \hline
\end{tabular}

\begin{theorem} 
\label{maindsbinary}
A set of elements from a universe of size $n$ can be maintained using $n +o(n)$ bits
to support {\it insert, delete, search and findany} operations in constant time.
We can also enumerate all elements of the set (in no particular order) in $O(k+1)$ time where $k$ is the number of elements in the set. The data structure can be initialized in $O(1)$ time.
\end{theorem}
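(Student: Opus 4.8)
The plan is to keep the $n$-bit characteristic vector $B$ of $S$ as the dominant cost and to bolt onto it a two-level directory, plus a tiny dictionary, that makes \emph{findany} run in $O(1)$ time while adding only $o(n)$ bits. Fix $w=\lceil\lg n\rceil$ and group $B$ into \emph{words} of $w$ consecutive positions and \emph{superblocks} of $w$ consecutive words, so a superblock spans $\Theta(\lg^2 n)$ positions and there are $N=\lceil n/w^2\rceil=\Theta(n/\lg^2 n)$ of them. For each superblock I store one $w$-bit \emph{summary} word whose $c$-th bit records whether the $c$-th word of that superblock is non-empty; these summaries occupy $N\cdot w=\Theta(n/\lg n)=o(n)$ bits. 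Finally I maintain the set $\mathcal{T}\subseteq\{1,\dots,N\}$ of non-empty superblocks as an explicit \emph{sparse set}: a dense array $L$ listing the members together with a position array $\mathit{pos}$ giving, for each superblock, its index in $L$. Since $\lg N=O(\lg n)$, this dictionary costs $O(N\lg N)=O(n/\lg n)=o(n)$ bits, supports insert, delete and membership in $O(1)$ time (delete by swapping with the last entry of $L$), and answers its own \emph{findany} trivially by returning $L[1]$.

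All four operations then reduce to $O(1)$ work plus a constant number of single-word ``find a set bit'' primitives. To locate a set bit in a word in $O(1)$ time I will precompute a table mapping each $\frac12\lg n$-bit half-word to the index of its lowest set bit; the table has $O(\sqrt n)$ entries and uses $o(n)$ bits, and a full word is handled by examining its two halves. For \emph{search}$(i)$ I check whether $i$'s superblock lies in $\mathcal{T}$ and, if so, whether the relevant summary bit and then the relevant bit of $B$ are set. For \emph{insert}$(i)$ and \emph{delete}$(i)$ I update the single affected word of $B$ and propagate the possible state transitions upward: a word toggling empty/non-empty flips a summary bit, and a superblock's summary toggling zero/non-zero inserts or deletes that superblock in $\mathcal{T}$. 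For \emph{findany} I take any superblock $j=L[1]\in\mathcal{T}$, use the primitive to find a non-empty word from $j$'s (non-zero) summary, then a set bit of that (non-zero) word, and return the corresponding element, or NIL when $\mathcal{T}$ is empty. Enumeration walks $L$ and, for each listed superblock, walks the set bits of its summary and of each indicated word; because every non-empty superblock and every non-empty word contributes at least one reported element, the total is $O(k+1)$.

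The step I expect to be most delicate is simultaneously achieving $O(1)$ \emph{initialization} and the $o(n)$ space bound, since the $n$-bit vector $B$ and the summaries cannot be zeroed in constant time. I will resolve this with lazy initialization governed by the invariant that \emph{the directory, not the raw bits of $B$, is the authority on membership}: a word of $B$ is trusted only when its summary bit is $1$, and a summary word is trusted only when its superblock lies in $\mathcal{T}$. Thus at start-up it suffices to initialize the sparse set $\mathcal{T}$, which needs no zeroing at all thanks to the standard cross-checked representation (a candidate $j$ is a member exactly when its recorded slot is in range and $L[\mathit{pos}[j]]=j$). When a superblock first enters $\mathcal{T}$ I zero only its one summary word, and when a word first becomes non-empty I zero only that single word of $B$ before setting the bit; each such clearing is $O(1)$ and occurs exactly on the transition that makes the data valid, so stale bits are never read. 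Verifying that this invariant is preserved by the transition bookkeeping above, together with the space tally $n+O(n/\lg n)=n+o(n)$ and the $O(1)$ per-operation cost, then yields the theorem.
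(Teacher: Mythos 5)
Your proposal is correct, and it follows the same overall strategy as the paper: keep the $n$-bit characteristic vector as the dominant cost, impose a two-level blocking (regions of $\lg^2 n$ bits subdivided into $\Theta(\lg n)$-bit pieces), maintain a constant-time-initializable directory of the non-empty regions, and finish each query with an $O(\sqrt n)$-size table lookup (or word operations) inside a single piece. Where you diverge is in the middle layer of the directory: the paper keeps, for every non-empty block, an explicit queue of its non-empty sub-block indices, together with a pointer array (\emph{sub-block-array}) into those queues and a counter array (\emph{number}) to detect when a block empties; you instead pack one summary bit per word into a single machine word per superblock, test emptiness by a word-zero comparison, and keep only the top-level sparse set $L/\mathit{pos}$ (which is functionally the paper's \emph{block-queue} plus \emph{block-array}, realized by the same cross-checked constant-time-initialization trick). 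Your version buys two things: the lower-order space term drops to $O(n/\lg n)$ rather than the paper's $O(n\lg\lg n/\lg n)$ (matching the improvement the paper credits to Hagerup and Kammer), and the $O(1)$-initialization argument is cleaner --- the paper's claim that membership ``just looks at the $i$-th bit of $S$'' is not literally compatible with an uninitialized $S$, whereas your explicit invariant that the directory is the authority, with lazy zeroing of a summary word or $B$-word exactly at the transition that validates it, closes that gap. The paper's queue-based middle layer, for its part, needs no summary-word bit manipulation and generalizes directly to its multi-set variant (Theorem~\ref{mainds}), but nothing in your design obstructs the same generalization.
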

\begin{proof}
Let $S$ be the characteristic bit vector of the set having $n$ bits.
We follow a two level blocking structure of $S$, as in the case of succinct structures supporting rank and select~\cite{Clark96,Munro96}. However, as $S$ is `dynamic' (in that bit values can change due to insert and delete), we need more auxiliary information. In the discussion below, sometimes we omit floors and ceilings to keep the discussion simple, but they should be clear from the context. 

We divide the bit vector $S$ into $n/\lg^2 n$ blocks of consecutive $\lg^2 n$ bits each, and divide each such block into up to $2\lg n$ sub-blocks of size $\lceil (\lg n)/2 \rceil$ bits each. 
We call a block (or sub-block) non-empty if it contains at least a $1$.
We maintain the non-empty blocks, and the non-empty
sub-blocks within each block in linked lists (not necessarily in order). Within a sub-block, we find the first $1$ or the next $1$ by a table look up. We provide the specific details below.

First, we maintain an array {\it number} indicating the number of $1$s in each block, i.e., $number[i]$ gives the number of $1$s in the $i$-th block of $S$.
It takes $O(n \lg \lg n/\lg^2 n)$ bits as each block can have at most $\lg^2 n$ elements of the given set. 
Then we maintain a queue (say implemented in a space efficient resizable array~\cite{BrodnikCDMS99}) {\it block-queue} 
having the block numbers that have a $1$ bit, and new block numbers are added to the list as and when new blocks get $1$. It can have
at most $n/\lg^2 n$ elements and so has $O(n/\lg^2 n)$ indices taking totally $O(n/\lg n)$ bits.
In addition, every element in {\it block-queue} has a pointer to another queue of sub-block numbers of that block that have an element of $S$. Each such queue has at most $2\lg n$ elements each of size at most $\lg {\lg n}$ bits each (for the sub-block index).
Thus the queue {\it block-queue} along with the queues of sub-block indices takes $O(n \lg \lg n/\lg n)$ bits.
We also maintain an array, {\it block-array}, of size $n/\lg^2 n$ where {\it block-array$[i]$} points to the position of block $i$ in {\it block-queue} if it exists, and is a NIL pointer otherwise and array, {\it sub-block-array}, of size $2n/\lg n$ where {\it sub-block-array$[i]$} points to the position of the subblock $i$ in its block's queue if its block was present in {\it block-queue}, and is a NIL pointer otherwise. So, {\it block-array} takes $n/\lg n$ bits and {\it sub-block-array} takes $2n \lg \lg n/\lg n$ bits.

We also maintain a global table $T$ precomputed that stores for every bitstring of size $\lceil (\lg n)/2 \rceil$, and a position $i$, the position of the first $1$ bit after the $i$-th position. If there is no `next $1$', then the answer stored is $-1$ indicating a NIL value.
The table takes $O(\sqrt n (\lg \lg n)^2)$ bits. This concludes the description of the data structure that takes $n+O(n\lg \lg n/\lg n)$ bits. See Figure $1$
for an illustration.

\begin{figure}[h]
\label{findany_ds_pic}
\begin{center}
 \includegraphics[scale=.8, keepaspectratio=true]{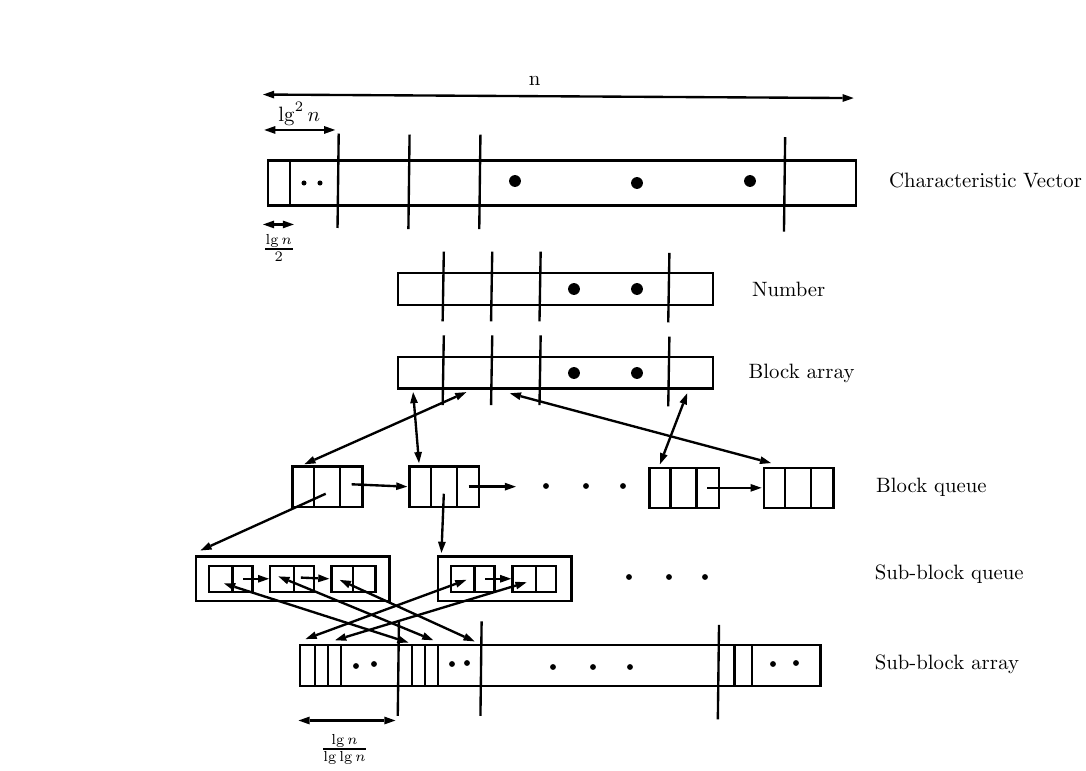}
\end{center}
\caption{An illustration of the inner working details of our findany data structure. The precomputed table is not shown in the diagram.}
\end{figure}

Now we explain how to support each of the required operations. Membership is the easiest: just look at the $i$-th bit of $S$ and answer accordingly. In what follows, when we say the `corresponding bit or pointer', we mean the bit or the pointer corresponding to the block or the sub-block corresponding to an element, which can be determined in constant time from the index of the element. 
To insert an element $i$, first determine from the table $T$, whether there is a $1$ in the corresponding sub-block (before the element is inserted), set the $i$-th bit of $S$ to $1$, and increment the corresponding value in {\it number}. If the corresponding pointer of {\it block-array} was NIL, then insert the block index to {\it block-queue} at the end of the queue, and add the sub-block corresponding to the $i$-th bit into the queue corresponding to the index of the block in {\it block-queue}, and update the corresponding pointers of {\it block-array} and {\it sub-block-array}. If the corresponding bit of {\it block-array} was not NIL (the big block already had an element), and if the sub-block did not have an element before (as determined using $T$), then find the position of the block index in {\it block-queue} from {\it block-array}, and insert the sub-block index into the queue of that block at the end of the queue. Update the corresponding pointer of {\it sub-block-array}.

To support the delete operation, set the $i$-th bit of $S$ to $0$ (if it was already $0$, then there is nothing more to do) and decrement the corresponding number in {\it number}. Determine from the table $T$ if the sub-block of $i$ has a $1$ (after the $i$-th bit has been set to $0$). If not, then find the index of the sub-block from the arrays {\it block-array} and {\it sub-block-array} and delete that index from the block's queue from {\it block-queue}. If the corresponding number in {\it number} remains more than $0$, then there is nothing more to do. If the number becomes $0$, then find the corresponding block index in {\it block-queue} from the array {\it block-array}, and delete that block (along with its queue that will have only one sub-block) from {\it block-queue}. Update the pointers in {\it block-array} and {\it sub-block-array} respectively. As we don't maintain any order in the queues in {\it block-queue}, if we delete an intermediate element from the queue, we can always replace 
that element by the last element in the queue updating the pointers appropriately.

To support the findany operation, we go to the tail of the queue {\it block-queue}, if it is NIL, we report that there is no element in the set, and return the NIL value. Otherwise, go to the block at the tail of {\it block-queue}, and get the first (non-empty) sub-block number from the queue, and find the first element in the sub-block from the table $T$, and return the index of the element.

To enumerate the elements of the set, we traverse the list {\it block-queue} and the queues of each element of {\it block-queue}, and for each sub-block in the queues, we find the next $1$ in constant time using the table $T$ and output the index.

To enable initialization in $O(1)$ time, with each entry of the block (sub-block) queue, we also store the block index corresponding to that entry -- analagous to the ``on-the-fly array initialization'' technique of
[Exercise 2.12 of~\cite{AhoHU74}, Section III.8.1 of~\cite{Mehlhorn84},~\cite{FredrikssonK16}]\footnote{We thank Szymon Grabowski for bringing~\cite{FredrikssonK16} to our attention.}, for example. The (sub) block index stored with the entries in the (sub) block queue act as the ``back pointers'' to the (sub) block-array. Also, instead of storing the precomputed tables to compute the `first 1 bit after the $i$-th position' operation, we can support the operation using $O(1)$ word operations~\cite{fich}.
\end{proof}
We generalize to maintain a collection of more than one disjoint subsets of the given universe to support the insert, delete, membership and findany operations. In this case, insert, delete and findany operations should come with a set index (to be searched, inserted or deleted).
\begin{theorem} 
\label{mainds}
A collection of $c$ disjoint sets that partition the universe of size $n$ can be maintained using 
$n \lg c +o(n)$ bits 
to support {\it insert, delete, search and findany} operations in constant time.
We can also enumerate all elements of any given set (in no particular order) in $O(k+1)$ time where $k$ is the number of elements in the set. The data structure can be initialized in $O(1)$ time.
\end{theorem}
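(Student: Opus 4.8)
The plan is to reduce the $c$-set problem to the single-set structure of Theorem~\ref{maindsbinary} by replacing the characteristic bit vector with a \emph{colour vector}. Concretely, I would store $A[1..n]$ over the alphabet $\{0,1,\ldots,c\}$ (with $0$ a reserved ``empty'' symbol) using Theorem~\ref{nlgc}, spending $n\lg c + O(\lg^2 n)$ bits and supporting reads and writes of any $A[i]$ in $O(1)$ time. Since the $c$ sets partition the universe, every element carries exactly one colour, so $search(i,s)$ is answered by reading $A[i]$ and comparing with $s$, while $insert(i,s)$ and $delete(i,s)$ are just rewrites of $A[i]$ (an insert into set $s$ overwriting whatever class $i$ was in, a delete overwriting with the empty symbol). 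All that remains is to support $findany(s)$, i.e.\ to return, in $O(1)$ time, some position $i$ with $A[i]=s$; this is exactly the nontrivial part of the generalization, and it is why the update and query operations must each come with a set index.

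For $findany$ I would re-run the two-level scheme of Theorem~\ref{maindsbinary}, but made colour-aware. Partition $[1..n]$ into blocks and sub-blocks as before and, for each colour $s$, keep (i) a count array recording how many elements of colour $s$ each block holds, (ii) a \emph{block queue} listing the blocks that currently contain colour $s$, with, per such block, a queue of its sub-blocks containing colour $s$, and (iii) the back-pointer arrays mapping a block or sub-block to its position in those queues. Within a sub-block, ``the first element of colour $s$ after position $j$'' is obtained either by a precomputed table indexed by the sub-block contents, the colour, and $j$, or---to avoid a table whose size grows with $c$---by the $O(1)$ word operations of~\cite{fich} applied to the $A$-word holding that sub-block. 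The four operations then mirror the binary case almost verbatim, with every test ``is there a $1$ here'' replaced by ``is there a colour-$s$ element here,'' read off $A$; deletions again refresh the witness by consulting the queues, and an intermediate queue deletion is handled by swapping in the last element, as before. Enumeration of a given set and $O(1)$ initialization carry over unchanged: I walk colour $s$'s block queue and the sub-block queues hanging off it, emitting each colour-$s$ position found by the in-word search in $O(k+1)$ time, and I reuse the ``on-the-fly'' back-pointer trick of Theorem~\ref{maindsbinary}, one instance per colour, storing with each queue entry the (sub-)block index it represents.

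The main obstacle---and the only place where the generalization is not purely mechanical---is the space accounting. A naive per-colour replication of the auxiliary structures costs $c$ times the $O(n\lg\lg n/\lg n)$ overhead of Theorem~\ref{maindsbinary}, and the block and sub-block queues can together hold one entry per (block, colour) incidence, i.e.\ up to $\Theta(n)$ entries each carrying an $\Omega(\lg n)$-bit index, which threatens $\Theta(n\lg n)$ total. For the applications driving this theorem (the space-efficient BFS uses $c=3$) any constant $c$---indeed any $c=o(\lg n/\lg\lg n)$---already makes the replicated overhead $o(n)$, so the claimed $n\lg c + o(n)$ is immediate in that regime.

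To push the bound to all $c$ I would (a) store the queues in resizable arrays~\cite{BrodnikCDMS99} and charge their entries against distinct (block, colour) incidences rather than against colours, (b) choose the block and sub-block granularities so that the number of such incidences times the bits per index stays $o(n)$, and (c) replace every colour-indexed table by the in-word $O(1)$ operations of~\cite{fich}, so that no table size depends on $c$. Verifying that these choices simultaneously keep each auxiliary array $o(n)$ while preserving $O(1)$-time navigation is the crux, and is the step on which I would spend the most care.
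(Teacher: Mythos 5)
Your proposal follows essentially the same route as the paper: replace the characteristic bit vector by a set-index (colour) vector stored via Theorem~\ref{nlgc}, and replicate the block/sub-block queue machinery of Theorem~\ref{maindsbinary} once per set; this is exactly what the paper's (two-sentence) proof does. One concrete deviation works against you: by reserving a $0$ symbol for ``empty'' you store a vector over an alphabet of size $c+1$, costing $n\lg(c+1)+O(\lg^2 n)$ bits, which for constant $c$ exceeds the claimed $n\lg c+o(n)$ by $\Theta(n)$ bits (e.g.\ $n\lg 4$ versus $n\lg 3+o(n)$ when $c=3$). The paper avoids this by exploiting the partition hypothesis: every element lies in exactly one of the $c$ sets at all times, so $S[i]$ ranges over exactly $c$ values and no empty symbol is needed; a delete is only ever meaningful as part of a move into another set (as in the BFS application, where vertices migrate between the colour classes), so the invariant is preserved. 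Dropping your empty symbol recovers the stated bound.

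On the other hand, your space accounting is more careful than the paper's. The paper simply keeps ``a copy of such structures for each of the $c$ sets'' and calls the total $o(n)$, which is valid only when $c$ is constant or at most $o(\lg n/\lg\lg n)$ --- precisely the regime you identify, and the regime of the intended application. Your concerns about per-colour lookup tables growing with $c$ (which you fix via the $O(1)$ word operations of~\cite{fich}, an alternative the paper itself mentions for the binary case) and about charging queue entries to (block, colour) incidences address a large-$c$ setting for which the paper makes no claim and offers no argument; they are not needed to match the paper, but they are not wrong.
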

\begin{proof}
The higher order term is for representing the (generalized) characteristic vector $S$ where $S[i]$ is set to the number (index) of the set where 
the element is present. From Theorem~\ref{nlgc}, $S$ can be represented using 
$n \lg c +o(n)$ bits so that the $i$-th value can be retrieved or set in constant time. The rest of the data structures and the algorithms are as in the proof of Theorem~\ref{maindsbinary} (hence the extra $o(n)$ bits), we have a copy of such structures for each of the $c$ sets.
\end{proof}

\section{Breadth First Search}
\label{BFS}


\noindent
We explain how breadth first search (BFS) can be performed in a space efficient manner using the data structure of Theorem \ref{mainds}. Our goal is to output the vertices of the graph in the BFS order. The algorithm is similar in spirit to the $O(n)$ space, $O(m+n)$ algorithm of \cite{ElmasryHK15}, ours is perhaps simpler, we explain the details for completeness.
We start as in the 
textbook BFS by coloring all vertices white. The algorithm grows the search starting at a vertex $s$, making it grey and adding it to a queue. Then the 
algorithm repeatedly removes the first element of the queue, and adds all its white neighbors at the end of the queue (coloring them grey), coloring the element black after removing it from the queue. As the queue can store up to $O(n)$ elements, the space for the queue can be $O(n \lg n)$ bits.
To reduce the space to $O(n)$ bits, we crucially observe the following two properties of BFS:
\begin{itemize}
\item
Elements in the queue are only from two consecutive levels of the BFS tree.
\item
Elements belonging to the same level can be processed in any order, but elements of the lower level must be processed before processing elements of the higher level. 
\end{itemize}

The algorithm maintains four colors: white, grey1, grey2 and black, and represents the vertices with each of these colors as sets $W, S_1, S_2$ and $B$ respectively using the data structure of Theorem~\ref{mainds}. It starts with initializing $S_1$ (grey1) to $s$, $S_2$ and $B$ as empty sets and $W$ to contain all other vertices. Then it processes the elements in each set $S_1$ and $S_2$ switching between the two until both sets are empty. As we process an element from $S_i$, we add its white neighbor to $S_{i+1 \ mod 2}$ and delete it from $S_i$ and add it to $B$.
When $S_1$ and $S_2$ become empty, we scan the $W$ array to find the next white vertex and start a fresh BFS again from that vertex. As insert, delete, membership and findany operations take constant time, and we are maintaining four sets, we have from Theorem ~\ref{mainds}, 
\begin{theorem}
\label{BFS_theorem}
Given a directed or undirected graph $G$, its vertices can be output in a BFS order starting at a vertex using $2n + o(n)$ bits in $O(m+n)$ time.
\end{theorem}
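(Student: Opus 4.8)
The plan is to instantiate the data structure of Theorem~\ref{mainds} with $c = 4$ sets, one per color, and to argue that the alternating level-by-level processing described above faithfully implements textbook BFS. Since the four sets $W, S_1, S_2, B$ partition the vertex set $\{1, 2, \ldots, n\}$ and each of \emph{insert}, \emph{delete}, \emph{search} and \emph{findany} runs in constant time, Theorem~\ref{mainds} hands us the space bound for free: representing a partition into $c=4$ classes costs $n\lg 4 + o(n) = 2n + o(n)$ bits. Thus the space claim is immediate once the reduction to the data structure is set up, and the real content of the proof is correctness together with the running time.

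For correctness I would maintain the invariant that, at the start of each pass, all currently grey vertices sit in a single one of $S_1, S_2$ and constitute exactly one BFS level $\ell$, while the other grey set is empty. When draining $S_i$ by repeated \emph{findany}, each returned vertex $u$ is moved to $B$ only after every white neighbour of $u$ has been recoloured grey, inserted into $S_{(i+1)\bmod 2}$, and simultaneously deleted from $W$. Because every white neighbour of a level-$\ell$ vertex lies at level $\ell+1$, draining $S_i$ writes only into the \emph{other} grey set, so the two grey sets never mix non-consecutive levels; this is precisely the first of the two properties of BFS noted above. Removing a vertex from $W$ the instant it is discovered guarantees, via the partition property of the data structure, that a later neighbour at the same level will see (by a \emph{search}) that the vertex is no longer white and will not re-enqueue it. Finishing $S_i$ \emph{entirely} before switching to $S_{(i+1)\bmod 2}$ realises the second property: every level-$\ell$ vertex is emitted before any level-$(\ell+1)$ vertex, so the output is a valid BFS order, while within a level the arbitrary \emph{findany} order is permitted. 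When both grey sets are empty the current search component is exhausted; as the input is assumed connected this occurs only once all vertices have been output, but to be safe one invokes \emph{findany} on $W$ to restart from any remaining white vertex, halting when it returns NIL.

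For the time bound I would charge work to vertices and edges. Each vertex is inserted into a grey set once, returned by \emph{findany} once, and moved to $B$ once, contributing $O(1)$ per vertex. Each edge $(u,v)$ is inspected a constant number of times, when $u$ (and, in the undirected case, $v$) is drained, and each inspection performs only constantly many $O(1)$ operations of Theorem~\ref{mainds}, so the edge contribution sums to $O(m)$. Together this gives $O(m+n)$. The one spot needing care is the restart: a naive rescan of $W$ could cost $\Omega(n)$ per component, but replacing it with a single \emph{findany} on $W$ keeps each restart at $O(1)$, so even a disconnected input stays within $O(m+n)$. I do not expect a genuine obstacle; the only subtlety is the bookkeeping that keeps $W, S_1, S_2, B$ a true partition at every step, since the $2n+o(n)$ accounting of Theorem~\ref{mainds} relies on exactly that disjointness.
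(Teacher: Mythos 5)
Your proposal is correct and follows essentially the same route as the paper: instantiate Theorem~\ref{mainds} with the four sets $W, S_1, S_2, B$ (giving $n\lg 4 + o(n) = 2n+o(n)$ bits), alternate draining the two grey sets using \emph{findany} while moving discovered white neighbours to the other grey set, and charge $O(1)$ work per vertex and per edge for the $O(m+n)$ time bound. The only cosmetic difference is the restart for additional components, where you use \emph{findany} on $W$ while the paper scans $W$ (which is also fine with a forward-moving pointer, and the paper notes $W$ and $B$ need not even carry findany structures).
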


Note that, we don't need to build the findany structure on top of $B$ and $W$ i.e., they can be implemented as plain bitmaps. The findany structures are only required on sets $S_1$ and $S_2$ respectively to efficiently find grey vertices.

In what follows, we slightly deviate from the theme of the paper and show that we can improve the space further for performing BFS if we are willing to settle for more than a linear amount of time. We provide the algorithms below.

\subsection{Improving the space to $n \lg 3 +o(n)$ bits}\label{super_linear_bfs}
There are several ways to implement BFS using just two of the three colors used in the standard BFS \cite{CLRS}, but the space restriction, hence our inability to maintain the standard queue, provides challenges.
For example, suppose we change a grey vertex to white after processing, and use two colors to distinguish the grey vertices in different levels. Then we have the challenge of identifying those white vertices who have been processed (and hence black in the standard BFS) and those yet to be processed (white). For this, we make the following observation.
\begin{itemize}
 \item A vertex $v$ is black if and only if there exists a path consisting of all white vertices from the root $s$ to $v$ (without going through vertices in the sets grey1 or grey2 i.e. without encountering any grey vertices).
\end{itemize}

To check whether such a path exists we can use the reachabilty algorithm of Reingold~\cite{Reingold08} or Barnes et al.~\cite{BarnesBRS98} for an undirected or directed graph respectively on the white vertices (similar idea was used by Asano et al.~\cite{AsanoIKKOOSTU14} in one of their space efficient DFS implementations). Thus we could manage with just $3$ colors, hence using Theorem~\ref{nlgc}, and counting the space required by the reachabilty routine, we can get a BFS implementation using 
$n \lg 3 + o(n)$ 
bits albeit using a large polynomial (as needed by Reingold's or Barnes et al.'s implementation) time. Thus we obtain the following result,
\begin{theorem}
\label{BFS2}
Given a directed or undirected graph $G$, its vertices can be output in a BFS order starting at a vertex in polynomial time
using $n \lg 3 + o(n)$ 
bits of space.
\end{theorem}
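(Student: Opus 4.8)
The plan is to run BFS level by level while representing the three colours white, grey1 and grey2 with the vector structure of Theorem~\ref{nlgc}, recovering the information normally carried by the fourth colour (black) on demand via a reachability test, exactly as suggested by the displayed observation. Concretely, I would store a colour array $C[1..n]$ with entries in $\{0,1,2\}$ using $n\lg 3+O(\lg^2 n)$ bits, initialise $C[s]$ to grey1 and every other entry to white, and maintain the invariant that at the start of processing level $\ell$ the current frontier (the vertices at distance $\ell$ from $s$) carries the grey colour $c$ determined by the parity of $\ell$, the other grey colour $c'$ is unused, and every remaining vertex is white -- where a white vertex is either a processed (``black'') vertex at distance $<\ell$ or a genuine, unprocessed vertex at distance $>\ell$.

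First I would justify the separation property on which the whole scheme rests. When a vertex is turned black in ordinary BFS all of its (out-)neighbours have already been examined and are therefore non-white, so in our encoding a black vertex never has a genuine-white neighbour. Consequently, in the subgraph induced by the white-coloured vertices (grey vertices deleted) the set reachable from $s$ is exactly the set of black vertices: every black vertex is joined to $s$ by its BFS-tree path, whose internal vertices are all black, whereas any path from $s$ to a genuine-white vertex must cross the grey frontier and hence is destroyed once the grey vertices are removed. This argument is symmetric in the directed and undirected settings (using out-neighbours and directed paths in the former). Thus a white-coloured vertex $v$ is black if and only if $v$ is reachable from $s$ in the white-induced subgraph, which I can decide -- without materialising the subgraph, simply by filtering vertices through $C$ -- using Reingold's undirected reachability routine or the directed routine of Barnes et al., both of which run in polynomial time and $o(n)$ bits.

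With this test in hand, a single level step is: (a) scan $C$ and output every vertex coloured $c$ (this emits level $\ell$; an arbitrary within-level order is a legitimate BFS order, as already noted for the linear-time algorithm); (b) for every white vertex $v$ adjacent to some $c$-coloured vertex, run the reachability test and, if $v$ is \emph{not} reachable from $s$ in the white-induced subgraph, recolour $v$ with the other grey colour $c'$ -- the adjacency condition guarantees that exactly the distance-$(\ell+1)$ vertices are selected, and recolouring some of them only enlarges the separator, so the remaining genuine-white vertices stay unreachable and the test remains correct throughout the scan; (c) scan $C$ once more and recolour every $c$-vertex white, turning the old frontier black. Afterwards $c'$ holds the new frontier at distance $\ell+1$ and $c$ is empty, so the invariant is re-established with the grey roles swapped; I repeat until the frontier is empty, at which point (for disconnected inputs) a fresh source is chosen among the white vertices not reachable from $s$.

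Finally I would collect the bounds. The colour array costs $n\lg 3+O(\lg^2 n)$ bits; the reachability subroutine and the $O(\lg n)$-bit counters add only $o(n)$ bits, giving the claimed $n\lg 3+o(n)$ total. There are at most $n$ levels, each performing $O(n)$ colour scans and at most $O(m)$ reachability queries, each query polynomial, so the overall running time is polynomial. The part I expect to require the most care is the correctness argument of the previous paragraph: verifying that the characterisation ``black $=$ reachable in the white graph'' is preserved as vertices are recoloured within a level and across the grey-colour swaps, and checking that the adjacency-to-frontier restriction really isolates the next level rather than the whole unexplored region. The space and time accounting, by contrast, is routine given Theorem~\ref{nlgc} and the cited reachability algorithms.
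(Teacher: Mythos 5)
Your proposal matches the paper's proof essentially exactly: the same three-colour encoding via Theorem~\ref{nlgc}, the same key observation that a white-coloured vertex is black if and only if it is reachable from $s$ by a path avoiding all grey vertices, and the same use of Reingold's (undirected) or Barnes et al.'s (directed) low-space reachability routine to decide this, yielding $n\lg 3 + o(n)$ bits and polynomial time. Your write-up is in fact more detailed than the paper's own (which states the observation and the plan in a few lines), since you spell out the level invariant, the correctness of the separation argument under intra-level recolouring, and the grey-colour swap.
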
 

\noindent
Observe that the main bottleneck in the time complexity of the previous algorithm comes from the fact that we are using Reingold's algorithm~\cite{Reingold08} or Barnes et al.'s algorithm~\cite{BarnesBRS98} as subroutine to distinguish 
whether a vertex is black or white. To improve the runtime further (from a large polynomial), we give another $3$ color implementation overloading grey and black vertices, i.e., we use one color to represent grey and 
black vertices. Grey vertices remain grey even after processing. This poses the challenge of separating the grey vertices from the black ones correctly before exploring.

We will have three colors, one for the white unexplored vertices and two colors for those explored including those currently being explored. The two colors indicate the parity of the level (the distance from the starting vertex) of the explored vertices. Thus the starting vertex $s$ is colored $0$ to mark that its distance from $s$ is of even length and every other vertex is colored $2$ to mark them as unexplored (or white). We simply have these values stored in the representation of Theorem~\ref{nlgc} using $n \lg 3 + O(\lg^2 n)$ 
bits and we call this as the color array. The algorithm repeatedly scans this array and in the $i$-th scan, it changes all the $2$ neighbors of $i \ mod \ 2$ to $i+1 \ mod \ 2$. i.e., in one scan of the array, the algorithm changes all the $2$-neighbors of all the $0$ vertices to $1$, and in the next, it changes all the $2$-neighbors of all of the $1$ vertices to $0$. The exploration (of the connected component) stops when in two consecutive scans of the list, no $2$ neighbor is found. Note that for those vertices which would have been colored black in the normal BFS, none of its neighbors will be marked $2$, and so the algorithm will automatically figure them as black. The running time of $O(mn)$ follows because each scan of the list takes $O(m)$ time (to go over neighbors of vertices with one color, some of which could be black) and at most $n+2$ scans of the list are performed as in each scan (except the last two), the color of at least one vertex marked $2$ is changed. Thus we have the following theorem,

\begin{theorem}
\label{bfs2}
Given a directed or undirected graph, its vertices can be output in a BFS order starting at a vertex using $n \lg 3 + O(\lg^2 n)$
bits and in $O(mn)$ time. 
\end{theorem}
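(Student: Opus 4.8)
The plan is to turn the three-colour parity scheme sketched above into a clean inductive correctness argument and then read off the space and time bounds directly from Theorem~\ref{nlgc}. Concretely, I would store the colour array as a length-$n$ vector over the alphabet $\{0,1,2\}$ using the representation of Theorem~\ref{nlgc}, which costs $n\lg 3 + O(\lg^2 n)$ bits and supports reading or writing any entry in $O(1)$ time; colour $2$ means ``unexplored'', while colours $0$ and $1$ record the parity of the BFS distance of an explored vertex. Apart from this array the algorithm needs only a constant number of $O(\lg n)$-bit counters (the current scan index and a flag recording whether the current scan recoloured any vertex), and it touches the input only through plain adjacency lists, so the total workspace is $n\lg 3 + O(\lg^2 n)$ bits as claimed.

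For correctness I would prove the invariant $I(i)$: immediately after the $i$-th scan finishes, every vertex whose BFS distance from $s$ is at most $i+1$ carries colour $(\text{distance} \bmod 2)$, while every vertex at distance greater than $i+1$ still carries colour $2$. The base case $I(-1)$ is exactly the initialisation ($s$ coloured $0$, all others $2$). For the inductive step, scan $i$ visits precisely the vertices of colour $i \bmod 2$; by $I(i-1)$ these are exactly the explored vertices whose distance is at most $i$ and $\equiv i \pmod 2$, and a short case analysis on neighbour distances shows their only still-unexplored neighbours are the vertices at distance exactly $i+1$, which therefore get recoloured to $(i+1)\bmod 2$. Since $(i+1)\bmod 2 \neq i \bmod 2$, a vertex recoloured during scan $i$ is never revisited as a source within the same scan, so each distance-$(i+1)$ vertex is coloured exactly once. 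Emitting a vertex the first time its colour changes from $2$ (with $s$ emitted first) then outputs the vertices in non-decreasing distance order, i.e.\ in BFS order, and since the graph is connected the invariant guarantees the entire vertex set is eventually produced.

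The step I expect to be the real obstacle is justifying the overloading of ``grey'' and ``black'' together with the termination test, since the algorithm never explicitly blackens a vertex. The key point to make rigorous is that a vertex which would be black in textbook BFS has, by $I(i)$, all of its neighbours already coloured $0$ or $1$; hence when it is revisited (because it shares the parity colour of the current frontier) the table look-up over its adjacency list finds no colour-$2$ neighbour and the scan leaves the array unchanged on its account. This is precisely why a scan may cost $\Theta(m)$ rather than being charged only to newly explored edges. For termination I would argue from $I(i)$ that the frontier advances by one full level per scan, so once all reachable vertices are coloured, the current-parity scan and the next-parity scan both recolour nothing; thus two consecutive scans with no recolouring correctly certify completion, and the algorithm never stops early.

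Finally, the bounds follow routinely. Each scan reads the colours of all $n$ vertices and, for the vertices of the active parity, traverses their adjacency lists, for a total of $O(n+m)=O(m)$ time per scan (using $m\ge n-1$). By the invariant every productive scan colours at least one new vertex, of which there are at most $n-1$, so at most $n-1$ scans recolour and at most two further scans detect termination; this gives $O(n)$ scans and hence an overall running time of $O(mn)$. Combined with the space accounting above, this establishes Theorem~\ref{bfs2}.
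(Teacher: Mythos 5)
Your proposal is correct and follows essentially the same route as the paper: the same three-colour parity scheme stored via Theorem~\ref{nlgc}, the same overloading of grey/black resolved by the ``no colour-$2$ neighbour'' observation, the same two-consecutive-idle-scans termination rule, and the same $O(m)$-per-scan, $O(n)$-scans accounting. Your explicit invariant $I(i)$ simply makes rigorous the level-by-level argument the paper states informally, so there is no substantive difference.
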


The $O(m)$ time for each scan of the previous algorithm is because while looking for vertices labelled $0$ that are supposed to be `grey', we might cross over spurious vertices labelled $0$ that are `black' (in the normal BFS coloring). To improve the runtime further, we maintain two queues $Q_0$ and $Q_1$ each storing up to $n/\lg^2 n$ values to find the grey $0$ and grey $1$ vertices quickly, in addition to 
the color array that stores the values $0$, $1$ or $2$.
We also store two boolean variables, {\it overflow-Q0, overflow-Q1}, initialized to $0$ and to be set to $1$ when more elements are to be added to these queues (but they don't have room). Now the algorithm proceeds in a similar fashion as the previous algorithm except that, along with marking corresponding vertices $0$ or $1$ in the color array, we also insert them into the appropriate queues. i.e. when we expand vertices from $Q_0$ ($Q_1$), we insert their (white) neighbors colored $2$ to $Q_1$ ($Q_0$ respectively) apart from setting their color entries to $1$ ($0$ respectively). Due to the space restriction of these queues, it is not always possible to accomodate all the vertices of some level during the execution of BFS. So, when we run out of space in any of these queues, we continue to make the changes (i.e. $2$ to $1$ or $2$ to $0$) in the color array directly without adding those vertices to the queue, and we also set the corresponding overflow bit. 

Now instead of scanning the color array for vertices labelled $0$ or $1$, we traverse the appropriate queues spending time proportional to the sum of the degree of the vertices in the level. If the overflow bit in the corresponding queue is $0$, then we simply move on to the next queue and continue. When the overflow bit of a queue is set to $1$, then we switch to our previous algorithm and scan the array appropriately 
changing the colors of their white neighbors and adding them to the appropriate queue if possible.
It is easy to see that this method correctly explores all 
the vertices of the graph. The color array uses $n \lg 3 + O(\lg^2 n)$
bits of space. And, for other structures, $Q_0$, $Q_1$ and other variables, we need at most $O(n/\lg n)$ bits. So, overall the space requirement is $n \lg 3 + o(n)$
bits. To analyse the runtime, notice that as long as the overflow bit of a queue is $0$, we spend time proportional the number of neighbors of the vertices in that level, and we spend $O(m)$ time otherwise. When an overflow bit is $1$, then the number of nodes in the level is at least $n/\lg^2 n$ and this can not happen for more than $\lg^2 n$ levels where we spend $O(m)$ time each. Hence, the total runtime is $O(m \lg^2 n)$.

\begin{theorem}
\label{bfs4}
Given a directed or undirected graph, its vertices can be output in a BFS order starting at a vertex using $n \lg 3 + o(n)$
bits of space and in $O(m\lg^2 n)$ time. 
\end{theorem}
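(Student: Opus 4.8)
The plan is to obtain Theorem~\ref{bfs4} as a refinement of the $O(mn)$-time, $3$-color algorithm of Theorem~\ref{bfs2}, whose correctness I take as the starting point and whose only defect is the $O(m)$ cost charged to \emph{every} scan. I would first pin down the data invariant maintained between levels: after the frontier of parity $p \in \{0,1\}$ has been fully expanded, the color array (stored via Theorem~\ref{nlgc} in $n\lg 3 + O(\lg^2 n)$ bits) records every discovered vertex by the parity of its BFS level, and the queue $Q_{1-p}$ holds precisely those newly discovered vertices of the next level that were enqueued before $Q_{1-p}$ (possibly) overflowed, with the overflow flag of $Q_{1-p}$ set iff some vertex of that level had to be recolored in the array without being enqueued. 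The first recoloring of a $2$ vertex to $0$ or $1$ is the moment that vertex is output, so the output respects nondecreasing distance; since the within-level order is unconstrained, this yields a valid BFS order.

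For correctness I would argue two things. First, the parity coloring still distinguishes the ``grey'' frontier from the already-finished (``black'') vertices exactly as in Theorem~\ref{bfs2}: a finished vertex has no neighbor still colored $2$, so expanding it is harmless, and termination after two consecutive empty levels is unchanged. Second---and this is the delicate point---I must show that the switch to the array-scan fallback when a queue overflows loses nothing. When the overflow flag of $Q_p$ is set, I abandon $Q_p$ for that level and scan the whole color array for vertices colored $p$, recoloring their $2$-neighbors. Because recoloring a $2$ vertex to the opposite parity is \emph{idempotent}, and a vertex is enqueued and output only on its first recoloring, re-encountering frontier vertices already expanded from $Q_p$ causes no double output, no miscoloring, and no lost discovery; the scan simply reaches the frontier vertices that never made it into $Q_p$. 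Thus each level is expanded exactly once in effect, whether via its queue or via a fallback scan.

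The space bound is a direct tally: the color array costs $n\lg 3 + O(\lg^2 n)$ bits by Theorem~\ref{nlgc}; each of $Q_0,Q_1$ stores at most $n/\lg^2 n$ vertex indices of $\lg n$ bits, i.e. $O(n/\lg n)$ bits; the two overflow flags and a constant number of scalars are negligible. Hence the total is $n\lg 3 + o(n)$ bits. For the time bound I would separate levels into non-overflowing and overflowing. Expanding a non-overflowing level by traversing its queue costs time proportional to the sum of the degrees of that level's frontier vertices; since each vertex is a frontier vertex of exactly one level and each edge is examined a constant number of times across the whole search, the non-overflow work totals $O(m+n)=O(m)$. An overflowing level triggers a full $O(m)$ array scan, but a queue overflows only when the level contains at least $n/\lg^2 n$ vertices; as the levels partition the at most $n$ discovered vertices, at most $\lg^2 n$ levels can be this large, so the fallback scans cost $O(m\lg^2 n)$ in total. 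This dominates, giving the claimed running time.

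I expect the main obstacle to be making the overflow/fallback correctness fully rigorous: one must verify that interleaving queue-based expansion with a mid-level switch to array scanning neither reprocesses a vertex in a way that corrupts the BFS order of output nor drops a frontier vertex. This rests entirely on the idempotence of the recoloring step and on tying ``output'' to the first recoloring of each vertex; once that is established, the space and time accountings are routine charging arguments.
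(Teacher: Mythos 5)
Your proposal is correct and follows essentially the same route as the paper: the same three-color parity array stored via Theorem~\ref{nlgc}, the same two bounded queues $Q_0,Q_1$ of size $n/\lg^2 n$ with overflow flags, the same fallback to a full array scan on overflow, and the same charging argument (degree-proportional work for non-overflowing levels, $O(m)$ per overflowing level, at most $\lg^2 n$ such levels). Your explicit idempotence/first-recoloring argument for the mid-level switch is in fact a more careful justification of the step the paper dismisses with ``it is easy to see,'' and your handling of an overflowed level (going straight to the array scan) matches the optimization the paper only notes in its remark after the theorem.
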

\noindent
{\bf Remark:} We can slightly optimize the previous algorithm by observing the fact that a vertex $v$ belongs to an overflowed level then $v$ is expanded twice i.e., first time when the algorithm was expanding vertices from the queue and deleting them. And, secondly, as the overflow bit is set, the algorithm switches to our previous algorithm of Theorem \ref{bfs2} and scan the color array appropriately changing the colors of their white neighbors and adding them to the appropriate queue if possible. We can avoid this double expansion by checking the overflow bit first and if this bit is set, instead of taking vertices out of the corresponding queue and expanding, the algorithm can directly start working with the color array. We can still correctly retrieve all the vertices by checking the same condition. This ensures that all the vertices which belong to an overflowed level won't be expanded twice. 

Note that by making the sizes of the two queues to $O(n/(f(n) \lg n ))$ for any (slow growing) function $f(n)$, the space required for the queues will be $O(n/f(n))$ bits and the running time will be $O(m f(n) \lg n)$.

\begin{theorem}
\label{bfs5}
Given a directed or undirected graph, its vertices can be output in a BFS order starting at a vertex using $n \lg 3 + O(n/f(n))$
bits and in $O(mf(n) \lg n)$ time where $f(n)$ is any extremely slow-growing function of $n$.
\end{theorem}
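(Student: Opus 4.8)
The plan is to re-use, essentially verbatim, the queue-assisted three-color BFS from the proof of Theorem~\ref{bfs4}, treating the queue capacity as a tunable parameter rather than fixing it at $n/\lg^2 n$. Concretely, I would keep the color array of Theorem~\ref{nlgc} storing values in $\{0,1,2\}$ (using $n\lg 3 + O(\lg^2 n)$ bits), the two overflow bits {\it overflow-Q0}, {\it overflow-Q1}, and the two parity queues $Q_0, Q_1$, but now cap each queue at $\Theta(n/(f(n)\lg n))$ stored vertex indices. The exploration rule is unchanged: when expanding the current level via one queue we push its color-$2$ (white) neighbors onto the other queue and update the color array; when a queue runs out of room we set its overflow bit and fall back to the array-scanning strategy of Theorem~\ref{bfs2} for that level. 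Since only the queue capacity changes, correctness is inherited directly from the argument establishing Theorem~\ref{bfs4} --- every vertex is colored with the correct level parity and no vertex is processed out of order --- so the only real work is re-deriving the space and time bounds under the new parameter.

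For the space bound, each queue holds at most $\Theta(n/(f(n)\lg n))$ entries, and each entry is a vertex index of $O(\lg n)$ bits, so the two queues together occupy $O(n/f(n))$ bits; the overflow bits cost $O(1)$ and any precomputed table for the `first/next $1$' operation is $o(n)$. The dominant term is the color array at $n\lg 3 + O(\lg^2 n)$ bits. Because $f(n)$ is extremely slow-growing, $O(\lg^2 n)$ is absorbed into $O(n/f(n))$, giving the claimed $n\lg 3 + O(n/f(n))$ bits overall.

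For the time bound I would split the levels into non-overflowed and overflowed ones. A non-overflowed level is handled purely through the queues, so across the whole run these levels cost time proportional to $\sum_v (\deg(v)+1) = O(m+n)$. The key counting step is to bound the overflowed levels: a level overflows only if it contains at least $\Omega(n/(f(n)\lg n))$ vertices, and since the $n$ vertices are partitioned among the BFS levels, at most $O(f(n)\lg n)$ levels can be this large. Each overflowed level is processed by a constant number of $O(m)$-time array scans, so the total overflow cost is $O(m\,f(n)\lg n)$, which dominates the $O(m+n)$ non-overflow cost and yields the stated running time.

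The step I expect to require the most care is precisely this counting argument together with the constraint it silently imposes: the capacity $\Theta(n/(f(n)\lg n))$ must be at least $1$ for the queues to be meaningful, which forces $f(n) = O(n/\lg n)$, and the lower-order $O(\lg^2 n)$ term from Theorem~\ref{nlgc} must stay within $O(n/f(n))$ --- both hold for the ``extremely slow-growing'' $f$ in the statement but should be flagged explicitly. Beyond that, one should verify that the hand-off between the queue phase and the scanning phase within a single overflowed level neither misses nor double-processes vertices (the Remark preceding the theorem already addresses the double-expansion concern), but this is inherited from Theorem~\ref{bfs4} and needs only a pointer rather than a fresh proof.
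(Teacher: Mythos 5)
Your proposal is correct and is essentially identical to the paper's own argument: the paper proves Theorem~\ref{bfs5} exactly by taking the algorithm of Theorem~\ref{bfs4} and re-parameterizing the two queue capacities to $O(n/(f(n)\lg n))$, so that the queues cost $O(n/f(n))$ bits and at most $O(f(n)\lg n)$ levels can overflow, each incurring an $O(m)$-time scan. Your writeup is in fact more explicit than the paper's one-line remark, and the side conditions you flag (queue capacity at least $1$, and the $O(\lg^2 n)$ term being absorbed into $O(n/f(n))$) are legitimate details the paper leaves implicit.
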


Since the appearance of the conference version of our paper \cite{BanerjeeC016}, Hagerup et al. \cite{HagerupK16} presented an implementation of BFS taking $n \lg 3  + O(n/ \lg n)$
bits of space and the optimal $O(m+n)$ time, thus  improving the result of our Theorem \ref{bfs5}. But, note that, in our algorithm of Theorem \ref{bfs2}, we improved the space (in the second order) even further than what is needed in their algorithm~\cite{HagerupK16} albeit with degradation in time. We do not know whether we can reduce the space further (to possibly $n + o(n)$ bits) while still maintaining the runtime to $O(m \lg^c n)$ for some constant $c$ or even $O(mn)$. We leave this as an open problem. However, in the next section we provide such an algorithm for MST.

\subsection{Minimum Spanning Tree}
\label{minspan}

In this section, we give a space efficient implementation of the Prim's algorithm~\cite{CLRS} to find a minimum spanning tree. Here we are given a weight function $w: E \rightarrow Z$. We also assume that the weights of non-edges are $\infty$ and that the weights can be represented using $O(\lg n)$ bits. In particular, we show the following,

\begin{theorem}\label{mstproof}
A minimum spanning forest of a given undirected weighted graph, where the weights of any edge can be represented in $O(\lg n)$ bits, can be found using $n + O(n/f(n))$ bits and in $O(m \lg n f(n))$ time, for any function $f(n)$ such that $1 \leq f(n) \leq n$.
\end{theorem}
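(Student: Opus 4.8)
The plan is to give a space-frugal implementation of Prim's algorithm. The only structure I would allow to occupy $\Theta(n)$ space is a characteristic bit vector $T[1..n]$ marking which vertices already belong to the grown tree; everything else must fit in $O(n/f(n))$ bits. The engine of Prim's algorithm is a priority queue over the edges crossing the cut $(T,V\setminus T)$, and a textbook heap holding one entry per frontier vertex costs $\Theta(n\lg n)$ bits, so the heart of the argument is to run the algorithm with a priority queue of bounded capacity $s=\Theta(n/(f(n)\lg n))$, which occupies only $O(s\lg n)=O(n/f(n))$ bits, and to compensate for its small size by occasional full rescans of the edge set.

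Concretely, I would use a \emph{lazy}, edge-based variant of Prim: the queue $Q$ stores triples $(w(e),x,y)$ for crossing edges $e=(x,y)$ with $x\in T$, and I maintain it as a double-ended (min-max) priority queue so that both the lightest and the heaviest stored edge are accessible in $O(\lg s)=O(\lg n)$ time. This edge-based formulation is what lets me stay within the space budget: it needs no \emph{decrease-key}, and hence no position/locator array (which would itself cost $\Theta(n\lg n)$ bits); an edge whose non-tree endpoint has meanwhile entered $T$ is simply detected as \emph{stale} by an $O(1)$ membership test against $T$ and discarded when it surfaces. The invariant I would maintain is that every crossing edge currently outside $Q$ is at least as heavy as every crossing edge inside $Q$. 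This invariant makes \emph{extract-min} on $Q$ always return the globally lightest crossing edge, so the vertex it pulls in and the edge it outputs are correct Prim choices. The invariant is preserved by insisting that, whenever a newly exposed crossing edge must be inserted into a full $Q$, we evict the current \emph{maximum} rather than drop the newcomer; an overflow flag merely records that some crossing edges are no longer tracked. A rescan is needed only when $Q$ empties while crossing edges still remain (overflow set): I then sweep all $m$ edges and reload $Q$ with the $s$ lightest crossing edges, re-establishing the invariant and clearing the flag; if $Q$ empties with the flag clear, the current component is complete and I advance a pointer through $T$ to start a fresh tree for the next component.

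For the running time I would argue two things. First, each rescan must cost only $O(m)$: selecting the $s$ lightest among up to $m$ streamed crossing edges in $O(m)$ time and $O(s)$ space is done by the standard buffered-selection trick (keep a buffer of $2s$ candidates and apply linear-time selection to prune it to the $s$ lightest whenever it fills, an $O(1)$ amortized cost per edge). Second, the number of rescans must be $O(f(n)\lg n)$; combined with $O(m)$ per rescan this yields $O(m f(n)\lg n)$, and the $O(\lg n)$-time queue operations, of which there are $O(m)$ in total (each edge is inserted $O(1)$ times during its incident vertex's insertion, plus $O(n)$ reload insertions), contribute a further $O(m\lg n)\subseteq O(m f(n)\lg n)$.

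I expect the bound on the number of rescans to be the main obstacle, exactly as the overflow analysis was for the bounded-queue BFS of Theorem~\ref{bfs5}. The clean way is a charging argument: a rescan is triggered only by a ``large frontier'' event in which the queue drained after delivering $\Omega(s)$ newly added tree vertices, and since the total number of vertices added over the whole run is $n-1$, such events can occur at most $O(n/s)=O(f(n)\lg n)$ times. Making this progress guarantee rigorous --- ruling out pathological sequences in which repeated staleness or eviction lets the queue empty after too few genuine insertions --- is the delicate point, and the boundary cases of the claimed range $1\le f(n)\le n$ (where $s$ degenerates to $\Theta(1)$ and the algorithm gracefully becomes an $O(mn\lg n)$-time, $n+O(1)$-bit computation) must be checked separately. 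The space accounting ($n$ bits for $T$, and $O(n/f(n))$ bits for $Q$, its reload buffer, the overflow flag and the counters) is then routine.
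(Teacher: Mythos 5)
Your overall architecture (a bounded-capacity priority queue of size $s=\Theta(n/(f(n)\lg n))$, evict-the-max on overflow, $O(m)$-time rescans when the queue drains, plus an $n$-bit membership vector) matches the paper's, but your edge-based instantiation has two genuine gaps, and the second is precisely the point you flagged as delicate. First, your invariant is not preserved by your own rules. You evict only when inserting into a \emph{full} queue; after some extract-mins the queue is below capacity, and a newly exposed heavy crossing edge is then inserted without eviction, which can raise the maximum of $Q$ above edges that were evicted or dropped earlier. Concretely: capacity $2$, start vertex $a$ with edges $(a,b)$ of weight $1$, $(a,c)$ of weight $2$, $(a,d)$ of weight $3$; the edge of weight $3$ is dropped on overflow; after extracting $(a,b)$ you insert $(b,d)$ of weight $10$ into the non-full queue, after extracting $(a,c)$ you insert $(c,d)$ of weight $4$, and the algorithm then outputs $(c,d)$ even though $(a,d)$ of weight $3$ is the lightest crossing edge --- the output is not a minimum spanning tree. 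The paper avoids this by pairing \emph{every} insertion with an eviction of the current maximum, so its threshold variable $Max$ is monotone non-increasing within a phase, every untracked vertex provably has key at least $Max$, and the structure shrinks only via extract-min.

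Second, even with that fixed, your charging argument fails for an \emph{edge-based} queue: a rescan need not be preceded by $\Omega(s)$ vertex additions, because staleness can drain the queue after a single addition. If the $s$ lightest crossing edges all lead to the same outside vertex $v$, extracting one adds $v$ and the remaining $s-1$ entries become stale and are discarded as they surface, so the queue empties after one genuine step; an adversarial weight assignment repeats this, giving $\Theta(n)$ rescans and $\Theta(mn)$ total time rather than $O(n/s)$ rescans. This is exactly why the paper keys the bounded structure by \emph{vertices}: it stores triples $(v,d[v],\pi[v])$ for the $s$ outside vertices with smallest $d$-values, so every extract-min adds a vertex to the tree, each phase performs exactly $s$ additions, and the number of rescans is $O(f(n)\lg n)$ by your own counting argument. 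Note also that your stated reason for avoiding the vertex-based formulation --- that decrease-key requires a $\Theta(n\lg n)$-bit locator array --- is a misconception: locators are needed only for the $s$ vertices currently in the structure, and the paper supplies them with a balanced binary search tree over those indices carrying pointers into the heap, which fits in $O(n/f(n))$ bits. So the two obstacles are real, and resolving them essentially forces the paper's vertex-based design.
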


\begin{proof}
Our algorithm is inspired by the MST algorithm of \cite{ElmasryHK15}, but we work out the constants carefully. Prim's algorithm starts with initializing a set $S$ with a vertex $s$. For every vertex $v$ not in $S$, it finds and maintains 
$d[v] = \min \{ w(v,x): x \in S\}$ and $\pi[v]=x$ where $w(v,x)$ is the minimum among $\{ w(v,y): y \in S\}$. Then it repeatedly deletes the vertex with the smallest $d$ value from $V-S$ adding it to $S$. Then the $d$ values are updated by looking at the neighbors of the newly added vertex.

The space for $d$ values can take up to $O(n \lg n)$ bits.
To reduce the space to $O(n)$ bits, we find and keep, in $O(n)$ time, the set $M$ of the smallest $n/(f(n) \lg n)$ values among the $d$ values of the elements of $V\setminus S$ in a binary heap. This takes $O(n/f(n))$ bits and $O(n)$ time. We maintain the set $S$ in a bit vector taking $n$ bits. We maintain the indices of $M$ in a balanced binary search tree, and each node (index $v$) has a pointer to its position in the heap of the $d$ values, and also stores the index $\pi[v]$. Thus we can think of $M$ as consisting of triples $(v, d[v], \pi[v])$ where $d[v]$ is actually a pointer to $d[v]$ in the heap.
The storage for $M$ takes $O(n/f(n))$ bits. We also find and store the max value of $M$ in a variable $Max$ that also has the vertex label that achieves the maximum.
Now we execute Prim's algorithm by repeatedly deleting elements only from $M$ and updating (decreasing) values in $M$ until $M$ becomes empty. In particular, while updating the values we check if the new value is larger than the variable $Max$. In such cases, we don't do anything. Otherwise, we insert the new value in $M$ and delete the current vertex realizing the maximum value and we proceed further till $M$ becomes empty. Then (for $f(n)\lg n)$ times) we find the next smallest $n/(\lg n f(n))$ values from $V\setminus M\setminus S$ and continue the process. 

Finding the $d$ values of every element in $L= V \setminus S \setminus M$ requires $O(m)$ time (for finding the minimum among all edges incident with vertices in $S$), and finding the smallest $n/f(n) \lg n$ values among them take $O(n)$ time. These steps are repeated $O(f(n) \lg n)$ times resulting in the overall runtime of $O(m f(n) \lg n)$ .
 
In the heap, $n-1$ deletemins and up to $m$ decrease key operations are executed which take $O((m+n)\lg n)$ time by using a binary heap. Note that more sophisticated (for example, Fibonacci heap) implementations are unnecessary as the other operations dominate the running time.
\end{proof}

\section{A Decrement data structure: Applications of Findany dictionary}\label{topo}
In what follows we use our findany data structure of Section~\ref{datastructure} to develop a data structure as below.
\begin{theorem}
\label{decrementds}
Let $x_1, x_2, \ldots x_n$ be a sequence of non-negative integers, and let $m=\sum_{i=1}^n x_i$. Then the sequence can be represented using at most $m+2n + o(m+n)$ bits such that we can determine whether the $i$-th element of the sequence is $0$ and
decrement it otherwise, in constant time.
\end{theorem}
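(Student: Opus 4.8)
The plan is to encode the sequence so that a decrement never shifts anything and is realised by flipping two adjacent bits, and to derive everything from a single bit vector of length exactly $m+2n$. For each $i$ I would reserve a \emph{block} of $x_i+2$ consecutive cells and store in it the pattern $1^{c_i}\,0\,1^{e_i}\,0$, where $c_i$ is the current value of $x_i$, $e_i=x_i-c_i$ is the number of decrements performed so far (so $c_i+e_i=x_i$ is invariant), the middle $0$ is an \emph{inner separator} splitting the value field from the decrement field, and the trailing $0$ is the block delimiter. Concatenating the blocks gives a bit vector $B$ of length $\sum_i(x_i+2)=m+2n$, which I would store using Theorem~\ref{nlgc} (alphabet $\{0,1\}$) so that any cell can be read or written in $O(1)$ time; this already accounts for the leading $m+2n+O(\lg^2 n)$ bits. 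The key structural invariant is that $B$ always contains exactly $m$ ones and $2n$ zeros, with precisely two zeros per block: the inner separator of block $i$ is the $(2i-1)$-th zero and its trailing delimiter is the $(2i)$-th zero of $B$.

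Given this layout the two operations reduce to a constant number of $\mathrm{select}_0$ queries. Writing $s_i=\mathrm{select}_0(B,2i)$ for the delimiter of block $i$, block $i$ starts at $s_{i-1}+1$, and element $i$ is zero iff $c_i=0$, i.e. iff the first cell of the block is already its inner separator; the test is simply reading whether $B[s_{i-1}+1]=0$. To decrement, let $q=\mathrm{select}_0(B,2i-1)$ be the current inner separator, which sits at $s_{i-1}+c_i+1$; moving one unit from the value field to the decrement field amounts to sliding this separator one cell to the left, i.e. setting $B[q-1]\leftarrow 0$ and $B[q]\leftarrow 1$. Both updates touch only two adjacent cells, the block lengths $x_i+2$ never change, and no data is moved, so once the two $\mathrm{select}_0$ positions are known each operation is $O(1)$. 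Correctness follows from the invariant: after the update block $i$ reads $1^{c_i-1}\,0\,1^{e_i+1}\,0$, the zero count is preserved, and the first-cell test flips to ``zero'' exactly when the last unit is removed.

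The step I expect to be the main obstacle is supporting these $\mathrm{select}_0$ queries in worst-case $O(1)$ time within $o(m+n)$ extra bits, since $B$ changes under decrements. The delimiters $s_i$ are easy: the block lengths $x_i+2$ never change, so their positions are fixed and a sampled select index for the $n$ delimiters costs only $o(m+n)$ bits. The hard case is the inner separator $q=\mathrm{select}_0(B,2i-1)$, which is the first $0$ after the block start and which slides one cell left on every decrement --- a \emph{monotone} dynamic successor-of-$0$ query. Here I would invoke the findany structure of Theorem~\ref{maindsbinary}: its precomputed ``next bit'' tables already return $q$ in $O(1)$ whenever the value field $1^{c_i}$ fits inside a single sub-block (of about $\frac{1}{2}\lg(m+n)$ cells), and for longer value fields I would maintain a coarse, block-level index of inner-separator positions that can be refreshed in $O(1)$ per decrement precisely because $q$ only ever moves by one. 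Reconciling the short- and long-field regimes into a single worst-case-$O(1)$ procedure, and checking that this coarse index together with the sampled delimiter index, the sub-block tables and the findany overhead all fit in the $o(m+n)$ slack, is the crux of the argument; I would then confirm $O(m+n)$-time initialisation and the exact $m+2n+o(m+n)$ space bound.
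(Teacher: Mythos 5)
Your encoding is appealing --- it even matches the $m+2n$ constant exactly --- but the argument is incomplete at precisely the point you yourself label ``the crux,'' and that point is where essentially all the difficulty of this theorem lives. Two things remain unestablished. First, locating the inner separator of block $i$ in worst-case $O(1)$ time within $o(m+n)$ extra bits is not settled by your sketch: a full position pointer per ``long'' block costs $\Theta(\lg (m+n))$ bits for each of up to $\Theta(m/\lg (m+n))$ such blocks, i.e.\ $\Theta(m)$ bits, which blows the budget; to fit, you must store pointers only at sub-block granularity and then prove, via a summation/concavity argument over the long blocks, that their total length is $o(m+n)$ --- none of which appears in the proposal. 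Second, even your ``easy'' delimiter select is subtler than claimed: delimiter zeros and separator zeros are the \emph{same symbol} in a bit string that changes under decrements, so the delimiters are not locally decodable from $B$, and the static select index of Theorem~\ref{staticbit} (whose lowest level performs table lookups on the stored string) cannot be applied as-is to ``the even-indexed zeros''; a sampled list of every $k$-th delimiter gives $O(1)$ access only to the sampled ones, and filling in the rest needs some dynamically maintained hint (e.g.\ the type of the first zero in each sub-block), which again must be specified and charged to the space bound.

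The paper closes exactly these holes by refusing to re-derive the machinery. It keeps one findany structure of Theorem~\ref{maindsbinary} per element: $x_i$ is represented as a set $S_i\subseteq\{1,\ldots,x_i\}$ initialized to the full universe, the test ``is $x_i=0$?'' is ``does findany return NIL?'', and a decrement is findany followed by delete --- all $O(1)$, with the $o(x_i)$ auxiliary space already proved inside that theorem (note the set need not remain a prefix, so no separator bookkeeping arises at all). Moreover, the per-element structures are concatenated using a genuine \emph{third} symbol $2$ as delimiter, and a static select structure for that symbol (a $3$-symbol generalization of Theorem~\ref{staticbit}) locates structure $i$ in $O(1)$ time: since the positions of the $2$'s never change even though the surrounding bits do, staticness is immediate and no decodability issue can occur. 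If you wish to salvage your sliding-separator encoding, the shortest repair is the same one: run the findany structure (its block/sub-block queues and tables) as a black box inside each block instead of inventing a coarse index from scratch, and make block location genuinely static by using a distinguishable delimiter symbol rather than an ordinary zero.
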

\begin{proof}
Our first attempt is to encode each integer in unary, delimited by a separate bit, to take 
$m+n$ bits. 
A select structure, as in Theorem~\ref{staticbit}, can help us access the corresponding elements. However, decrementing them involves changing this bitstring and so we need a dynamic version of Theorem~\ref{staticbit} that has a $O(\lg n/\lg \lg n)$ runtime for each of the operations~\cite{NavarroS14}.

Now we show how to use our findany structure of Theorem~\ref{mainds} to obtain a linear time algorithm.
We maintain conceptually separate findany structures for each $x_i$ using $x_i+o(x_i)$ bits. 
Each of them stores a subset of $\{1, 2, \ldots x_i\}$ and is initialized to the full universe set.
The total space is $O(m+n) + o(m+n)$ bits. The findany structures of all the integers are concatenated into a single array of bits, separated by a delimiter (say the symbol $2$). And we build a $select$ structure for the delimiter ($2$) in $O(m+n)$ time and using $o(m+n)$ bits of extra space using a generalization of Theorem~\ref{staticbit} for 3 symbols (see, for example \cite{GuptaHSV07})
so that we can identify the findany structure of vertex $i$ in constant time by navigating to the $i$-th delimiter symbol.
Note that though this bitstring will change in the course of an operation, the delimiter symbols aren't modified, and the length of the string doesn't change and so a static $select$ structure suffices. Now decrementing $x_i$ amounts to performing the findany operation on $x_i$'s structure (that actually tests whether $x_i =0)$ and deleting the element if any, that is output from the findany operation.
\end{proof}
 
Using the data structure we just developed, we show the following theorems.
\begin{theorem}
\label{top3}
Given a directed acyclic graph $G$, its vertices can be output in topologically sorted order using $O(m+n)$ time using $m+3n+ o(n+m)$ bits of space. The algorithm can also detect if $G$ is not acyclic.
\end{theorem}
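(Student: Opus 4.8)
The plan is to implement Kahn's classical topological-sort algorithm, replacing its two expensive data structures---the array of in-degrees and the queue of ``ready'' vertices---by the space-efficient structures already developed. Recall that Kahn's algorithm maintains the in-degree of every vertex, repeatedly outputs a vertex of in-degree $0$, and upon outputting a vertex $v$ decrements the in-degree of each out-neighbour of $v$, adding any vertex whose in-degree drops to $0$ to the pool of ready vertices. If fewer than $n$ vertices are ever output the graph must contain a directed cycle.

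First I would set $x_i$ to be the in-degree of vertex $i$; since each edge contributes exactly $1$ to some in-degree we have $\sum_{i} x_i = m$, so the decrement structure of Theorem~\ref{decrementds} stores the whole in-degree sequence in $m+2n+o(m+n)$ bits and supports, in $O(1)$ time, both a (non-destructive) test of whether $x_i=0$ and a decrement of $x_i$. The in-degrees themselves can be read off in $O(m+n)$ time and without auxiliary space from the lengths of the in-neighbour lists of the assumed input representation, so the structure can be built in one left-to-right pass that writes the delimited unary encoding $1^{x_1}21^{x_2}2\cdots$. For the pool of ready vertices I would keep a single \emph{findany} structure $R$ over the universe $[1..n]$ as in Theorem~\ref{maindsbinary}, costing $n+o(n)$ bits. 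Adding these together gives $m+3n+o(m+n)$ bits, the claimed bound, plus an $O(\lg n)$-bit counter $c$ recording how many vertices have been emitted.

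The algorithm is then immediate: scan all vertices once, inserting into $R$ every $i$ with $x_i=0$; then repeatedly take $v\leftarrow\mathit{findany}(R)$, delete $v$ from $R$, output $v$ and increment $c$, and for each out-neighbour $w$ of $v$ decrement $x_w$ and test whether it has just reached $0$, in which case insert $w$ into $R$. When $R$ empties, I declare $G$ acyclic iff $c=n$ and otherwise report that $G$ is not acyclic. Correctness is exactly that of Kahn's algorithm, together with the observation that each vertex is inserted into $R$ at most once (its in-degree crosses $0$ exactly once, and a zero in-degree is never decremented again); hence the total work is $O(n)$ findany/insert/delete operations on $R$ and exactly $m$ decrements on the in-degree structure, each $O(1)$, for $O(m+n)$ time overall.

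The points that need care rather than the skeleton are, I expect, the following. (i) Building the decrement structure must not transiently use $\Theta(n\lg n)$ bits to hold the in-degrees; the fix is to obtain each $x_i$ directly as the length of the in-neighbour list of $i$ and emit its unary block on the fly, which the input representation makes possible. (ii) Detecting the ``drop to $0$'' event must use the non-destructive zero-test (the bare findany on $x_w$'s sub-structure) \emph{after} performing the single decrement, so that $w$ is never decremented twice; this is the one place where I rely on the decrement operation being internally a findany followed by a delete. (iii) The cycle test is subsumed by the counter $c$, since a DAG admits a linear order of all $n$ vertices, while in a graph with a directed cycle every vertex on that cycle retains positive in-degree forever, so $R$ empties with $c<n$. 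I view (i) and (ii) as the main obstacles, both of which are resolved by the specific primitives guaranteed by Theorems~\ref{maindsbinary} and~\ref{decrementds}.
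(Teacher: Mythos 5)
Your proposal is correct and follows essentially the same route as the paper: Kahn's algorithm with the in-degree sequence stored in the decrement structure of Theorem~\ref{decrementds} ($m+2n+o(m+n)$ bits), the ready pool kept in a findany structure of Theorem~\ref{maindsbinary} ($n+o(n)$ bits), zero-detection after each decrement via a findany call, and acyclicity detection by counting emitted vertices. The paper's proof is a terser version of exactly this argument, so there is nothing to add.
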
 

\begin{proof}
A standard algorithm repeatedly outputs a vertex with indegree zero and deletes that vertex along with its outgoing edges, 
until there are no more vertices. To implement this, we maintain first the set $Z$ of indegree $0$ vertices in the datastructure of Theorem~\ref{maindsbinary} to support findany operation in constant time. This takes $O(m+n)$ time and $n+o(n)$ bits. We also represent the indegree sequence of the vertices using the data structure of Theorem~\ref{decrementds}.
The algorithm repeatedly finds any element from $Z$, outputs and deletes it from $Z$. Then it
decrements the indegree of its out-neighbors, and includes them in $Z$ if any of them has become $0$ (that can be determined by another findany structure) in the process. If $Z$ becomes empty even before all elements are output (that can be checked using a counter or a bit vector), then at some intermediate stage of the algorithm, we did not encounter a vertex with indegree zero which means that the graph is not acyclic. 
\end{proof}

An undirected graph is $d$-degenerate if every induced subgraph of the graph has a vertex with degree at most $d$. For example, a graph with degree at most $d$ is $d$-degenerate. A planar graph is $4$-degenerate as every planar graph has a vertex with degree at most $4$. The degeneracy order of a $d$-degenerate graph is an ordering $v_1, v_2, \ldots v_n$ of the vertices such that $v_i$ has degree at most $d$ among $v_{i+1}, v_{i+2}, \ldots v_n$. We show the following using our data structure developed in this section.
\begin{theorem}
\label{degenerate}
Given a $d$-degenerate graph $G$, its vertices can be output in $d$-degenerate order using 
$m+3n + o(m+n)$ bits and $O(m+n)$ time. The algorithm can also detect if the given graph is not $d$-degenerate.
\end{theorem}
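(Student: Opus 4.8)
<br />

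The plan is to mirror the topological sort algorithm of Theorem~\ref{top3}, replacing ``indegree'' with ``current degree'' and ``the set of indegree-$0$ vertices'' with ``the set of vertices whose current degree is at most $d$''. First I would set up the data structures: maintain a set $L$ of vertices whose current degree is at most $d$ using the findany structure of Theorem~\ref{maindsbinary} (taking $n+o(n)$ bits), and represent the degree sequence $x_1,\ldots,x_n$, where $x_v=\deg(v)$, using the decrement data structure of Theorem~\ref{decrementds} (taking $m+2n+o(m+n)$ bits, since $\sum_v \deg(v)=2m$, but observing that we can instead store each degree so that the total is $O(m+n)$). I would initialize $L$ by one scan over all vertices, inserting $v$ into $L$ whenever $\deg(v)\le d$, which is checked in constant time per vertex against the threshold $d$.

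The core loop repeatedly performs \emph{findany} on $L$ to obtain a vertex $u$ of current degree at most $d$, outputs $u$ as the next vertex in the degeneracy order, and deletes $u$ from $L$. I would then ``remove'' $u$ from the graph by decrementing, for each neighbor $w$ of $u$ that has not yet been output, the current degree of $w$ in the decrement structure; whenever some $w$'s degree drops to at most $d$ (which I detect by comparing against the threshold after the decrement), I insert $w$ into $L$. To avoid decrementing across already-removed vertices and to iterate only over still-present neighbors in total $O(m+n)$ time, I would use a separate bit vector marking output vertices and the cross pointers of the adjacency representation, so each edge is processed a constant number of times overall. If at some point $L$ becomes empty before all $n$ vertices have been output, then the remaining induced subgraph has every vertex of degree strictly greater than $d$, so the graph is not $d$-degenerate and I report this; otherwise the full ordering is produced.

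I would argue correctness by the standard degeneracy invariant: because the graph is $d$-degenerate, at every stage the remaining induced subgraph contains a vertex of degree at most $d$, so $L$ is nonempty until all vertices are output, and the greedy removal of any such vertex yields a valid degeneracy order. The decrement structure of Theorem~\ref{decrementds} exactly supports the two operations I need in constant time---testing whether a degree has reached the threshold (via checking whether further decrements are possible relative to the tracked value) and decrementing---so each edge contributes $O(1)$ amortized work and each \emph{findany}/delete on $L$ is constant time, giving the claimed $O(m+n)$ running time. For the space, $L$ costs $n+o(n)$ bits, the decrement structure costs $m+2n+o(m+n)$ bits, and the auxiliary bit vector costs $n$ bits, summing to $m+3n+o(m+n)$ bits as stated.

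The main obstacle I anticipate is the bookkeeping of ``current degree'' under the decrement-only interface: the decrement structure lets me decrement and test-for-zero, but the degeneracy condition is a threshold of $d$ rather than $0$. I would resolve this by initializing each vertex's stored value to $\deg(v)-d$ (clamped at $0$, treating any vertex already at degree $\le d$ as immediately eligible), so that ``degree drops to at most $d$'' becomes exactly the zero-test that Theorem~\ref{decrementds} supports; care is needed so that this reindexing does not change the total $\sum_v(\deg(v)-d)^+$ beyond the $O(m+n)$ budget, which holds since each term is at most $\deg(v)$. The remaining subtlety is ensuring that, when decrementing neighbors of a removed vertex, I skip neighbors already output so that no edge is double-counted; the output bit vector together with the cross pointers handles this cleanly within the stated linear time.
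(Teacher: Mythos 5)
Your algorithm is essentially the paper's: maintain the set $Z$ of vertices of current degree at most $d$ in the findany structure of Theorem~\ref{maindsbinary}, store the clamped values $x_v=\max\{0,\deg(v)-d\}$ in the decrement structure of Theorem~\ref{decrementds}, repeatedly output and delete a vertex returned by findany, decrement its neighbors' values and insert into $Z$ those that reach $0$, and report that $G$ is not $d$-degenerate if $Z$ empties before all $n$ vertices are output. Your correctness argument and the $O(m+n)$ time bound are fine. However, your space accounting does not actually deliver the stated $m+3n+o(m+n)$ bits, for two concrete reasons.

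First, charging the decrement structure only $m+2n+o(m+n)$ bits requires $\sum_v \max\{0,\deg(v)-d\}\le m$, and your justification (``each term is at most $\deg(v)$'') only yields $2m$, i.e., a $2m+O(n)$-bit bound. The missing step is a degeneracy-orientation argument: orient every edge from its earlier to its later endpoint in a degeneracy order, so each vertex has out-degree at most $d$; then $\max\{0,\deg(v)-d\}\le \mathrm{indeg}(v)$, and summing over $v$ gives $m$. Second, your extra $n$-bit ``already output'' vector (together with the cross pointers) makes your own component sum $\bigl(n+o(n)\bigr)+\bigl(m+2n+o(m+n)\bigr)+n = m+4n+o(m+n)$, not $m+3n$ as you claim. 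That vector is also unnecessary: an attempted decrement on an already-output neighbor is a harmless no-op (its stored value is $0$, and the structure simply reports this and refuses to decrement), and you should insert $w$ into $Z$ only when its value genuinely transitions to $0$, which is detectable in constant time since the decrement operation reports whether the value was already $0$ and a follow-up findany query tests whether it is $0$ now; re-inserting a vertex already in $Z$ is idempotent in any case. Dropping the vector and the cross pointers---as the paper does, which also lets the algorithm run on a plain adjacency list representation---recovers the claimed $m+3n+o(m+n)$ bound.
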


\begin{proof}
As in the topological sort algorithm, we maintain the set $Z$ of vertices whose degree in the entire graph is at most $d$, 
using our findany data structure. This takes $O(m+n)$ time and $n+o(n)$ bits. Then we represent the degree sequence of the 
vertices using the data structure of Theorem~\ref{decrementds} except we subtract $d$ from each of them. I.e. 
$x_i = max \{0, d_i - d\}$ where $d_i$ is the degree of the $i$-th vertex. The algorithm repeatedly finds any element from $Z$, outputs and deletes it from $Z$. Then it decrements the degree of its neighbors, and includes them in $Z$ if any of them has become $0$. If $Z$ becomes empty even before all elements are output (that can be checked using a counter or a bit vector), then at some intermediate stage of the algorithm, we did not encounter a vertex with degree less than $d$,  which means that the graph is not $d$-degenerate. 
\end{proof}

Note that, for all the algorithms discussed in the last two sections i.e., Section \ref{BFS} and \ref{topo} respectively, we can assume that the input graph is represented as the standard {\it adjacency list}~\cite{CLRS} instead of the more powerful adjacency list along with cross pointers representation.


\section{DFS and its applications using $O(m+n)$ bits} \label{dfs1}
The classical and standard implementation of DFS using a stack and color array takes $O(m+n)$ time and $O(n \lg n)$ bits of space. Improving on this, recently Elmasry et al.~\cite{ElmasryHK15} showed the following,
\begin{theorem}\label{general_dfs}
A DFS traversal of a directed or undirected graph $G$ with $n$ vertices and $m$ edges can be performed using $O(n \lg \lg n)$ bits of space and $O(m+n)$ time. 
\end{theorem}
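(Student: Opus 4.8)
The plan is to follow the textbook iterative DFS but to attack the single source of super-linear space, namely the explicit stack of up to $n$ vertices each carrying a $\lg n$-bit adjacency-list iterator. First I would keep an explicit three-coloring of the vertices (white $=$ undiscovered, grey $=$ on the current stack, black $=$ finished), stored in $2n$ bits, or in $n\lg 3 + o(n)$ bits via Theorem~\ref{nlgc}; the grey vertices are exactly the vertices of the current root-to-top tree path, so the colours already encode the \emph{set} of stack vertices for free. What remains is to encode, for each grey vertex, enough information to (i) resume scanning its adjacency list where we left off and (ii) backtrack to its parent. The cross-pointer representation of the input is what lets two consecutive stack vertices be recovered from one another, so we never need to store the vertex labels themselves, only these ``resume pointers''.

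The space-saving idea is to refuse to store an exact $\lg(\deg v)$-bit iterator for every grey $v$, and instead to store only an \emph{approximate} pointer costing $O(\lg\lg n)$ bits: the index of the \emph{block} currently being scanned when $v$'s adjacency list is cut into $\Theta(\lg n)$ equal blocks. On backtracking to $v$ we recover the exact position by rescanning inside the recorded block, skipping neighbours that are no longer white. With $O(\lg\lg n)$ bits per grey vertex and at most $n$ grey vertices at any time, the stack then fits in $O(n\lg\lg n)$ bits, matching the claimed bound, while the colours contribute only $O(n)$ further bits. Parent recovery on a pop is handled in the same spirit: from $v$ we locate, among its grey neighbours, the one whose recorded block is consistent with $v$ being its current child, again using the cross pointers to jump back in $O(1)$ per probe.

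The hard part will be proving that the total restoration work stays within $O(m+n)$. A naive analysis is dangerous: a vertex $v$ of degree $d_v$ may be re-entered up to $d_v$ times, and if each re-entry rescanned a $\Theta(d_v/\lg n)$-sized block from its start, the cost would blow up to $\Theta(d_v^2/\lg n)$, which is not linear. I expect the crux of the argument to be a two-tier treatment of the blocks together with a careful amortization: within a block we must advance monotonically rather than restart, so the within-block offset has to be reconstructed cheaply, charging each scanned edge only a constant number of times over the whole traversal; and high-degree vertices, which are few (at most $2m/\lg n$ have degree at least $\lg n$), may be afforded finer or even exact pointers, since their total contribution is then controlled. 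Making this accounting rigorous --- choosing the block granularity so that the per-edge charge is $O(1)$ while the stored pointers remain $O(\lg\lg n)$ bits wide --- is the step I would spend the most care on; everything else reduces to maintaining the standard DFS colour invariants.
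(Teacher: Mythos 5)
First, an important point of context: the paper never proves Theorem~\ref{general_dfs} at all --- it is quoted as a prior result of Elmasry et al.~\cite{ElmasryHK15}, and the paper's own DFS algorithms take different routes to different bounds ($O(m+n)$ bits in Section~\ref{dfs1}, Theorems~\ref{directed1} and~\ref{undirected1}, and $O(n\lg (m/n))$ bits in Section~\ref{simp-bicon}, Theorem~\ref{thm:biconn-parentptrs}). So your proposal must stand on its own, and as written it has a correctness hole, not just an unfinished accounting. The hole is parent recovery. When a vertex $u$ finishes, every grey neighbour of $u$ is an ancestor of $u$, and your rule ``pick the grey neighbour whose recorded block is consistent with $u$ being its current child'' does not single out the parent: a proper ancestor $w$ passes the test whenever the arc $(w,u)$ happens to lie in the same $\Theta(d_w/\lg n)$-sized block of $w$'s adjacency array as $w$'s true current arc, which is easy to arrange. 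Backtracking to such a $w$ silently abandons all grey vertices between $u$'s real parent and $w$; their unscanned white neighbours (say, a pendant vertex hanging off one of them) are then never visited. It is precisely to make this step unambiguous that the paper pays for dedicated structures: the array $E$ of Section~\ref{dfs1} marks the actual tree arcs ($m+n$ extra bits), and Lemma~\ref{lem:parent-pointers} stores each vertex's parent position exactly ($\sum_v\lceil\lg d_v\rceil = O(n\lg (m/n))$ bits). Both budgets exceed $O(n\lg\lg n)$ in general, which is a strong hint that unambiguous backtracking inside $O(n\lg\lg n)$ bits is the actual crux of the theorem, and block-level ``consistency'' does not deliver it.

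Second, the quantitative side does not reach the claimed bound even where it is sound. For $d_v\le\lg n$ an exact iterator already fits in $\lceil\lg d_v\rceil\le\lceil\lg\lg n\rceil$ bits, so your block decomposition buys nothing there; the whole content of the theorem concerns vertices of degree exceeding $\lg n$. For those, your fallback of ``finer or even exact pointers'' costs up to $\lg n$ bits for up to $2m/\lg n$ vertices, i.e.\ $\Theta(m)$ bits in the worst case --- $\omega(n\lg\lg n)$ on dense graphs, and no better than the $O(m+n)$-bit linear-time DFS the paper itself gives in Section~\ref{dfs1}. If instead the high-degree vertices keep coarse pointers to save space, the amortization you explicitly defer is not a routine calculation but provably fails in its naive form: on a star, every backtrack to the centre rescans its current block from the boundary, giving $\Theta(B^2)$ work per block of size $B=\Theta(n/\lg n)$ and $\Theta(n^2/\lg n)$ overall against a budget of $O(m+n)=O(n)$; and the ``monotone within-block advance'' you invoke requires storing exactly the $\approx\lg(d_v/\lg n)$ bits of within-block offset that the space bound forbids. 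A degree threshold of roughly $(m/n)\lg n/\lg\lg n$ for switching to exact pointers would repair both the rescanning time and the space simultaneously, but the vertices left with coarse pointers are then exactly the ones for which the parent-recovery problem of the previous paragraph remains unsolved. In short, you have correctly located both obstacles, but the proposal resolves neither, and they are the theorem.
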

In this section, we start by improving upon the results of
Theorem~\ref{general_dfs} of Elmasry et al.~\cite{ElmasryHK15} and Theorem $4$ of Asano et al.~\cite{AsanoIKKOOSTU14} by showing an $O(n)$-bit DFS traversal method for sparse graphs that runs in linear time. Using this DFS as backbone, we provide a space efficient implementation for computing several other useful properties of an undirected graph.

\subsection{DFS}
In what follows we describe how to perform DFS in $O(n+m)$ time using $O(n+m)$ bits of space. Note that, this is better (in terms of time) than both the previous solutions for sparse graphs (when $m=o(n \lg \lg n)$) with same space bounds. The class of sparse graphs includes a large class of graphs including planar graphs, bounded genus, bounded treewidth, bounded degree graphs, and H-minor-free graphs. These are also the majority of the graph classes which arise in practice, and as our algorithm is simple with no heavy data structures used, we believe that an implementation of our algorithm will work very fast in practice for these graphs. 

Recall that, our input graphs $G=(V,E)$ are represented using the standard adjacency array along with cross pointers. We describe our algorithm for directed graphs, and mention the changes required for undirected graphs. Central to our algorithm is an encoding of the out-degrees of the vertices in unary. Let $V =\{1, 2, \cdots, n\}$ be the vertex set. The unary degree sequence encoding $O$ of the directed graph $G$ has $n$ $0$s to represent the $n$ vertices and each $0$ is followed by a number of $1$s equal to the out-degree of that vertex. Moreover, if $d$ is the degree of vertex $v_i$, then $d$ $1$s following the $i$-th $0$ in the $O$ array corresponds to $d$ out-neighbors of $v_i$ (or equivalently the edges from $v_i$ to the $d$ out-neighbors of $v_i$) in the same order as in the out-adjacency array of $v_i$. Clearly $O$ uses $n+m$ bits and can be obtained from the out-neighbors of each vertex in $O(m+n)$ time. We use another bit string $E$ of the same length where every bit is initialized to $0$. The 
array $E$ will be used to mark the tree edges of the DFS as we build the DFS tree, and will be used to backtrack when the DFS has finished exploring a vertex. The bits in $E$ are in one-to-one correspondence with bits in $O$. If $(v_i, v_j)$ is an edge in the DFS tree where $v_i$ is the parent of $v_j$, and suppose $k$ is the index of the edge $(v_i, v_j)$ in $O$, then the corrsponding location in the $E$ array is marked as $1$ during DFS. Thus once DFS finishes traversing the whole graph, the number of ones in the $E$ array is exactly the number of 
tree edges. We also store another array, say $C$, having entries from \{{\it white, gray, black\}} with the usual meaning i.e., each vertex $v$ remains white until it is visited, is colored gray when DFS visits $v$ for the first time, and is colored black when its out-adjacency array has been checked completely. We can represent $C$ using Lemma \ref{nlgc} in $n \lg 3 + o(n)$ bits so that individual entries can be accessed or modified in constant time. The bitvector $O$ is represented using the static rank-select data structure of Theorem~\ref{staticbit} that uses additional $o(m+n)$ bits. So overall we need $2m+(\lg 3+2)n + o(m+n)$ bits to represent the arrays $O, E$ and $C$.

Suppose $v_j$ is a child of $v_i$ in the final DFS tree. We
can think of the DFS procedure as performing the following two steps repeatedly until all the vertices are explored. 
First step takes place when DFS discovers a vertex $v_j$ for the first time, and as a result $v_j$'s color changes to gray from white. We call this phase as forward step. When DFS completes exploring $v_j$ i.e. the subtree rooted at $v_j$ in the DFS tree, it performs two tasks subsequently. First, it backtracks to its parent $v_i$, and then finds in $v_i$'s list the next white neighbor to explore. The latter part is almost similar to the forward step described before. We call the first part alone as backtrack step. In what follows, we describe how to implement each step in detail.  

We start our DFS with the starting vertex, say $r$, changing its color to gray in the color array $C$. 
Then, as in the usual DFS algorithm, we scan the out-adjacency list of $r$, and find the first white neighbor, say $v$, to make it gray. When the edge $(r, v)$ is added to the DFS tree, we mark the position corresponding to the edge $(r, v)$ in $E$ to~$1$\footnote{Note that to implement this step, we only require the $select_0$ operation in Theorem~\ref{staticbit}.}. We continue the process with the new vertex making it gray until we encounter a vertex $w$ that has no white out-neighbors. At this point, we will color the vertex $w$ black, and we need to backtrack.

To find the vertex to backtrack, we do the following. We go to $w$'s in-neighbor list to find a gray vertex which is its parent. For each gray vertex $t$ in $w$'s in-neighbor list, we follow the cross pointers to reach $w$ in $t$'s out-adjacency list and check its corresponding entry $(t,w)$ in $E$ array (using select operation to find $w$ after $t$-th $0$). 
Observe that, among all these gray in-neighbors of $w$, only one edge out of them to $w$ will be marked in $E$
as this is the edge that DFS traversed while going in the forward direction to $w$. So once we find an in-neighbor $t$ such that the position corresponding to $(t,w)$ in $E$ is marked and $t$ is gray, we know that $w$'s parent is $t$ in the DFS tree. 
Also the cross pointer puts us in the position of $w$ in $t$'s out-neighbor list, and we start from that position to find the next white vertex to explore DFS. So the only extra computation we do is to spend time proportional to the degree of each black vertex (to find its parent to backtrack) and so overall there is an extra overhead of $O(m)$ time. The navigation we do to determine the tree edges are on $O$ which is a static array, and so from Theorem~\ref{staticbit}, all these operations can be performed in constant time. Thus we have

\begin{theorem}
\label{directed1}
A DFS traversal of a directed graph $G$ can be performed in $O(n+m)$ time using $(2m+(\lg 3+2)n) + o(m+n)$ bits.
\end{theorem}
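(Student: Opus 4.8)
The plan is to verify three things about the procedure described above: that it outputs a genuine DFS traversal, that it fits in the claimed space, and that it runs in linear time. I would first set up the standard DFS invariant: at any point in the execution the gray vertices are exactly those on the current root-to-active path (the recursion stack of textbook DFS), the white vertices are unvisited, and the black vertices are fully explored. The forward step is then immediate — coloring the first white out-neighbor gray and marking the corresponding edge in $E$ is exactly what recursive DFS does — so essentially all the work goes into justifying the stackless backtracking.

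For correctness of backtracking, the key lemma I would prove is that when a vertex $w$ other than the root is colored black and we scan its in-neighbor list, there is exactly one in-neighbor $t$ with the edge $(t,w)$ marked in $E$, and this $t$ is the DFS parent of $w$ and is still gray. Uniqueness is essentially structural: the only bits set to $1$ in $E$ are those of tree edges, and every non-root vertex of a DFS tree has exactly one incoming tree edge, so at most one in-edge of $w$ can be marked; it was marked precisely when DFS first descended from $t$ into $w$. That $t$ is still gray at the moment $w$ becomes black is the standard ancestor invariant, since every proper ancestor of a just-finished vertex lies on the active path. Hence the marked-and-gray test singles out the parent correctly, and the cross pointer associated with the edge $(t,w)$ returns us to $w$'s position in $t$'s out-adjacency list, from which scanning resumes.

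For the resource bounds I would argue space and time separately. Space is additive over the three arrays: $O$ occupies $n+m$ bits and, represented by the static structure of Theorem~\ref{staticbit}, carries an extra $o(m+n)$ bits for rank/select; $E$ occupies another $n+m$ bits; and $C$, stored via Theorem~\ref{nlgc} over a three-symbol color alphabet, takes $n\lg 3 + o(n)$ bits with constant-time access. Summing gives $2m+(\lg 3+2)n + o(m+n)$ bits. For time, the forward scans traverse each out-adjacency list monotonically — after each backtrack the cross pointer resumes strictly past the previously explored child — so the total forward work is $O(m+n)$; the backtracking adds, for each vertex, a single pass over its in-neighbor list, contributing $O(m)$ overall; and every navigation on the static array $O$ is a constant-time rank/select by Theorem~\ref{staticbit}. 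Thus the running time is $O(m+n)$, which establishes Theorem~\ref{directed1}.

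The hardest part, and the only place where care is genuinely needed, is the backtracking lemma: lacking a stack, the algorithm must reconstruct the return pointer from $E$, the color array, and the cross pointers, and one must be sure that (i) exactly one in-edge is marked, so the parent is unambiguous, (ii) the parent is still gray, so the combined test never misfires on a non-tree edge or an already-finished ancestor, and (iii) resumption via the cross pointer never rescans a portion of an out-list already passed, which is what keeps the out-list scanning linear rather than quadratic. I expect (iii) to demand the most explicit bookkeeping, since it is precisely the property that rules out repeated work on the forward side and thereby secures the $O(m+n)$ bound.
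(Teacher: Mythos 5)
Your proposal is correct and follows essentially the same route as the paper: the same three arrays $O$, $E$, $C$ with the same space accounting, tree edges marked in $E$ during the forward step, and stackless backtracking by scanning the in-neighbor list for the gray in-neighbor whose edge is marked, using the cross pointers to resume the out-list scan. Your explicit backtracking lemma (uniqueness of the marked incoming tree edge plus the gray-ancestor invariant) is just a more formal statement of what the paper asserts with ``only one edge out of them to $w$ will be marked in $E$,'' so there is nothing to add.
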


For undirected graphs, first observe that the unary degree sequence encoding $O$ takes $2m+n$ bits as each edge appeares twice. As $E$ also takes $2m+n$ bits, overall we require $(4m+(\lg 3+2)n) + o(m+n)$ bits of space. As for DFS, observe that the forward step, as defined before, can be implemented in exact same manner. It is crucial to mention one subtle point that, while marking an edge $(v_i,v_j)$, we don't mark its other entry i.e. $(v_j,v_i)$. So when DFS finishes, {\it for tree edges exactly one of the two entries will be marked one in $E$ array}. Backtracking step is now little easier as we don't have to switch between two lists. We essentially follow the same steps in the adjacency array to check for a vertex $t$ in $w$'s array such that $t$ is gray and the corresponding entry for the edge $(t,w)$ is marked in $E$. Once found, we start with the next white vertex. Hence, 
\begin{theorem}
\label{undirected1}
A DFS traversal of an undirected graph $G$ can be performed in $O(n+m)$ time using $(4m+(\lg 3+2)n) + o(m+n)$ bits.
\end{theorem}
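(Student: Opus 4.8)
The plan is to mirror the directed construction of Theorem~\ref{directed1} essentially verbatim, changing only how the unary degree sequence $O$ and its companion marking array $E$ are built, and adjusting the backtracking routine to the fact that an undirected vertex now has a single adjacency list rather than separate in- and out-lists. The one genuinely new idea is an \emph{asymmetric} marking convention for tree edges: although each undirected edge $\{u,v\}$ occurs twice in the adjacency representation, when a forward step from a gray vertex $u$ discovers a white neighbor $v$ and adds $\{u,v\}$ to the DFS tree, I would mark \emph{only} the occurrence of $v$ inside $u$'s list (the parent-to-child direction) and deliberately leave the mirror occurrence of $u$ inside $v$'s list equal to $0$. This yields the invariant that, for every tree edge, exactly one of its two entries in $E$ is set, and it is the one sitting in the \emph{parent}'s list.

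First I would fix the representation and its space cost. Since each of the $m$ edges appears in two adjacency lists, the unary encoding $O$ consists of $n$ zeros (the vertex delimiters) followed in total by $2m$ ones, so $O$ occupies $2m+n$ bits; the marking array $E$ is in one-to-one correspondence with $O$ and hence also uses $2m+n$ bits. The three-valued color array $C$ (white, gray, black) is stored via Theorem~\ref{nlgc} in $n\lg 3 + o(n)$ bits with constant-time access, and the static rank/select structure of Theorem~\ref{staticbit} on the unchanging bitvector $O$ adds only $o(m+n)$ bits. Summing, the workspace is $2(2m+n)+n\lg 3+o(m+n)=(4m+(\lg 3+2)n)+o(m+n)$ bits, as claimed.

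Next I would run the DFS itself. The forward step is identical to the directed case: from the current gray vertex $u$ I scan its adjacency list for the first white neighbor $v$, color $v$ gray, and set in $E$ the bit at the occurrence of $v$ in $u$'s list (locatable in $O(1)$ time using $select_0$ on $O$ as in Theorem~\ref{staticbit}), leaving the mirror entry untouched. When exploration reaches a vertex $w$ all of whose neighbors are non-white, I color $w$ black and backtrack. To locate $w$'s parent I scan $w$'s adjacency list; for each gray neighbor $t$ I follow the cross pointer to the occurrence of $w$ inside $t$'s list and test its $E$-bit. By the invariant above, exactly one such $t$ — namely the vertex that discovered $w$ in a forward step — has this bit set, and that $t$ is the parent $p$. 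The same cross pointer lands me precisely at $w$'s position in $p$'s list, from which I resume the monotone scan of $p$'s list for its next white neighbor, exactly as ordinary DFS continues after returning from a child.

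The running time and correctness then follow the directed template. Forward scanning of each vertex's list is advanced monotonically over the whole traversal and so contributes $O(m+n)$ in total; the only overhead beyond plain DFS is the single pass over $\deg(w)$ entries performed when each vertex $w$ turns black to identify its parent, summing to $O(m)$; every navigation on the static array $O$ is constant time by Theorem~\ref{staticbit}. I expect the main obstacle to be the correctness argument for parent identification, i.e.\ proving that the marked gray neighbor found during backtracking is \emph{unique} and really is the parent. This is exactly where the asymmetric marking is indispensable: the occurrence of $w$ in a neighbor $t$'s list is marked if and only if $t$ added $w$ as a tree child, an event that happens for at most one $t$ during the entire search; a gray neighbor that is instead a \emph{child} of $w$ has its mark sitting in $w$'s own list, not in its list, and so is correctly ignored. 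Verifying that this invariant is maintained across interleaved forward and backtrack steps, together with checking that the monotone resumption never rescans a list entry more than a constant number of times, is the crux that makes the $O(m+n)$ bound go through.
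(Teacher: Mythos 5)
Your proposal is correct and follows essentially the same route as the paper: the same $2m+n$-bit unary encodings $O$ and $E$, the same $n\lg 3+o(n)$-bit color array, the same asymmetric convention of marking only the parent-side occurrence of each tree edge (which the paper also flags as the crucial subtle point), and the same cross-pointer-based backtracking that identifies the unique gray neighbor whose copy of the edge is marked. The space accounting and the $O(m)$ amortized overhead for parent-finding match the paper's argument as well.
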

We can decrease the space slightly by observing that, we are not really using the third color {\it black}. More specifically, we can continue to keep a vertex gray even after its subtree has been explored. As we only explore white vertices always and never expand gray or black, the correctness follows immediately. This gives us the following.
\begin{theorem}
\label{mixed}
A DFS traversal of a directed graph $G$ can be performed in $O(n+m)$ time using $(2m+3n) + o(m+n)$ bits. For undirected graphs, the space required is $(4m+3n) + o(m+n)$ bits.
\end{theorem}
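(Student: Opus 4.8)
The plan is to derive this directly from Theorems~\ref{directed1} and \ref{undirected1} by eliminating the third color. In those algorithms the color array $C$ stores one of three values \emph{white}, \emph{gray}, \emph{black} and is represented via Theorem~\ref{nlgc} using $n\lg 3 + o(n)$ bits. I would show that the \emph{black} value is never actually consulted to make a branching decision, so it can be dropped; with only two color values, $C$ collapses to a plain bitvector of $n$ bits supporting $O(1)$ access, replacing the $n\lg 3$ term by $n$ and yielding the claimed $2m+3n$ (directed) and $4m+3n$ (undirected) bounds, while leaving every other structure and every running-time argument untouched.

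The substantive step is the correctness claim: keeping a vertex gray after its subtree has been fully explored, rather than recoloring it black, does not change the traversal. First I would note that the forward step only ever searches for a \emph{white} out-neighbor and skips everything else; since gray and black are both treated as ``already visited,'' merging them into a single non-white color leaves this step identical. The only place the distinction could conceivably matter is the backtrack step, where, at a vertex $w$, we scan the in-neighbor list (for undirected graphs, the adjacency list) for a gray vertex $t$ whose entry $(t,w)$ is marked in $E$. Here I would establish the key invariant that \emph{for every vertex $w$ at most one incoming edge is ever marked in $E$}, namely the tree edge from $w$'s unique parent, because each position of $E$ is set to $1$ exactly once, when DFS first descends along that edge, and $w$ has exactly one parent (the undirected case is identical, with ``incoming edge'' read as the neighbor's list entry pointing to $w$). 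Consequently the marked edge alone pins down $w$'s parent $t$, independently of how the other in-neighbors are colored.

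It then remains to argue that this parent $t$ is gray in the two-color scheme at the moment we backtrack from $w$. Since $w$ is being backtracked, $t$ is still an ancestor on the current search path and has not finished, so $t$ is gray under \emph{both} the three-color and the two-color conventions; thus the test ``$t$ gray and $(t,w)$ marked'' still selects $t$ correctly, and in fact the grayness test is now redundant given the uniqueness of the marked edge. This shows that the modified algorithm produces exactly the same DFS tree and traversal order, so correctness follows as the text asserts.

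I expect the only real obstacle to be pinning down this backtracking invariant cleanly --- in particular confirming that no spurious in-neighbor of $w$ (one that would have been black in the standard run but is now gray) can carry a marked $E$-entry into $w$, which is exactly what the ``at most one marked incoming edge'' observation rules out. Once that is in place, the space accounting is immediate: for directed graphs $O$ and $E$ each use $n+m$ bits and $C$ uses $n$ bits, for a total of $2m+3n+o(m+n)$; for undirected graphs $O$ and $E$ each use $n+2m$ bits, giving $4m+3n+o(m+n)$; and the $O(m+n)$ time bound is inherited verbatim from Theorems~\ref{directed1} and \ref{undirected1}, since the per-operation costs on the static structures are unchanged.
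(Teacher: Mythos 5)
Your proposal is correct and follows essentially the same route as the paper: drop the color \emph{black}, keep finished vertices gray, and observe that the forward step only seeks white vertices while backtracking is resolved by the unique marked tree edge into each vertex in $E$. The paper simply asserts that ``correctness follows immediately''; your explicit invariant that at most one incoming edge of any vertex is ever marked in $E$ is exactly the justification that assertion leaves implicit, and your space accounting matches the paper's.
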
 

\subsection{Applications of DFS}
One of the classical applications of DFS is to determine, in a connected undirected graph, all the cut vertices and bridges which are defined as, respectively, the vertices and edges whose removal results in a disconnected graph. Since the early days of designing graph algorithms, researchers have developed several approaches to test biconnectivity and $2$-edge connectivity, find cut vertices and bridges of a given undirected graph. Most of these methods use depth-first search as the backbone to design the main algorithm. For biconnectivity and $2$-edge connectivity, the classical algorithm due to Tarjan~\cite{Tarjan72,Tarjan74} computes the so-called ``low-point" values (which are defined in terms of a DFS-tree) for every vertex $v$, and checks some conditions using that to determine cut vertices, bridges of $G$ and check whether $G$ is $2$-edge connected or biconnected. Brandes~\cite{Brandes02} and Gabow~\cite{Gabow00} gave considerably simpler algorithms for testing biconnectivity by using simple path-
generating rules instead of low-points; they call these algorithms path-based. All of these algorithms take $O(m+n)$ time and $O(n)$ words of space. Another algorithm due to Schmidt \cite{Schmidt13} is based on chain decomposition of graphs to determine biconnectivity and $2$-edge connectivity. Implementing this algorithm takes $O(m+n)$ time and $O(m)$ words of space. In what follows, we present a space efficient implementation for Schmidt's algorithm based on the DFS algorithm we designed in the previous section. We summarize our result in the theorem below.
\begin{theorem}
\label{dfsapps}
Given a connected undirected graph $G$, in $O(m+n)$ time and using $O(m+n)$ bits of space we can determine whether $G$ is $2$-vertex (and/or edge) connected. 
If not, in the same amount of time and space, we can compute all the bridges and cut vertices of the graph. 
\end{theorem}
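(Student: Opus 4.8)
The plan is to implement Schmidt's chain decomposition~\cite{Schmidt13} space-efficiently on top of the $O(m+n)$-bit DFS of Theorem~\ref{undirected1}. Recall that a chain decomposition is obtained by fixing a DFS tree $T$ rooted at $r$, orienting every back edge from its ancestor endpoint to its descendant endpoint, and then, visiting the vertices in DFS (preorder) order, starting a new chain at each still-unprocessed back edge: the chain follows the back edge down to the descendant and then climbs tree edges upward until it meets an already-visited vertex. Schmidt's characterization reads off all the required information from the decomposition: $G$ is $2$-edge-connected iff the chains cover every edge (the bridges being exactly the edges lying in no chain), and $G$ is $2$-vertex-connected iff, in addition, no chain other than the first is a cycle (a vertex is a cut vertex iff it is the first vertex of a cycle chain $C_i$ with $i \ge 2$). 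Thus it suffices to realise this decomposition, detect cycle chains, and detect uncovered edges, all within $O(m+n)$ time and bits.

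First I would run the DFS of Theorem~\ref{undirected1} to produce the unary degree array $O$ and the tree-edge marker array $E$; together with the cross pointers these let me test in $O(1)$ whether a given edge occurrence is a tree edge and, via the backtracking routine of Section~\ref{dfs1}, recover the parent of any vertex. To orient the back edges \emph{without} storing depth-first indices (which would cost $\Theta(n\lg n)$ bits and break the budget on sparse graphs), I would, during this same DFS, keep the three colours and set a single bit in an auxiliary array $B$ (indexed in one-to-one correspondence with $O$, hence $2m+n$ bits) at the ancestor occurrence of each back edge: when the search is at a vertex $w$ and sees a non-tree neighbour that is grey (on the stack) it is an ancestor, and a cross-pointer jump marks that occurrence in $B$, whereas a non-tree neighbour that is black is a descendant and is skipped, so each back edge is oriented exactly once. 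Then, in a second pass, I would traverse $T$ in preorder (a cheap DFS on the $n-1$ tree edges) and at each vertex $v$ process precisely the back-edge occurrences of $v$ flagged in $B$: for each such edge I follow it to its descendant and walk upward via parent pointers, marking vertices in an $n$-bit array and halting at the first already-marked vertex, recording whether the walk returns to $v$ (a cycle chain) and marking every traversed edge as covered in an $O(m)$-bit array.

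From the two bit arrays I then read off the answers: the uncovered edges are the bridges, so $G$ is $2$-edge-connected iff none remain; a cut vertex is any $v$ that starts a cycle chain after the first one, and $G$ is $2$-vertex-connected (for $n\ge 3$) iff there are no bridges and no such cut vertex. For the running time, the first DFS and the orientation of the back edges take $O(m+n)$ time; in the second pass every vertex is marked exactly once, so the total length of all upward walks is $O(n)$, while scanning adjacency lists to locate the $B$-flagged occurrences is $O(m)$ overall, giving $O(m+n)$ in total. The space is $O$, $E$, and $B$ ($O(m+n)$ bits each), the colour array ($n\lg 3 + o(n)$ bits via Theorem~\ref{nlgc}), two $n$-bit mark/cover arrays, and the $o(m+n)$ rank--select overhead of Theorem~\ref{staticbit}, i.e. $O(m+n)$ bits altogether.

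The main obstacle is exactly the one the array $B$ is designed to sidestep: Schmidt's guarantees hold only when chains are started in increasing depth-first-index order of their \emph{upper} endpoints, yet the natural single-pass order in which a DFS encounters back edges is governed by the \emph{lower} endpoints and is provably wrong, since it can leave genuinely non-bridge tree edges uncovered. Decoupling the two passes -- orienting every back edge first, and only then starting chains during an independent preorder sweep of the fixed tree -- is what restores Schmidt's order while spending only $O(m)$ extra bits instead of an explicit index or low-point array. The remaining care lies in the amortisation for the upward walks (each tree edge is climbed only $O(1)$ times because a walk halts at the first marked vertex) and in the boundary cases, namely distinguishing the first chain $C_1$ from later cycle chains and handling the root, which the ``cycle chain $C_i$ with $i\ge 2$'' rule already absorbs.
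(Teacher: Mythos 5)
Your proposal is correct and essentially reproduces the paper's own proof: both implement Schmidt's chain decomposition (Theorem~\ref{2ec}) on top of the $O(m+n)$-bit DFS of Theorem~\ref{undirected1}, walk each chain upward using the tree-edge array $E$ to recover parents, mark visited vertices and covered edges in bit arrays, and read off bridges (uncovered edges) and cut vertices (first vertices of non-initial cycle chains) exactly as you describe. The only deviation is mechanical: to start chains in DFI order of their upper endpoints without storing indices, you pre-orient the back edges into an extra $(2m+n)$-bit array $B$ during the first DFS and then sweep the tree in preorder, whereas the paper simply re-runs the deterministic DFS a second time, interleaved with the chain construction, and uses its colour array to identify the down-going back edges at each newly discovered vertex -- both devices stay within $O(m+n)$ time and bits.
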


Schmidt~\cite{Schmidt2010c} introduced a decomposition of the input graph that partitions the edge set of the graph into cycles and paths, called chains, and used this to design an algorithm to find cut vertices and biconnected components \cite{Schmidt13} and also to test 3-connectivity~\cite{Schmidt2010c} among others. We briefly recall Schimdt's algorithm and its main ingredient of {\it chain decomposition}. The algorithm first performs
 a depth first search on $G$. Let $r$ be the root of the DFS tree $T$. DFS assigns an index to every vertex $v$ i.e. the time vertex $v$ is discovered for the first time (discovery time) during DFS. Call it depth-first-index ($DFI(v)$). Imagine that the
the back edges are directed away from $r$
and the tree edges are directed towards $r$. The algorithm
decomposes the graph into a set of paths and cycles called
chains as follows. See Figure $2$ for an illustration.

\begin{figure}[h]
\begin{center}
 \includegraphics[scale=.7, keepaspectratio=true]{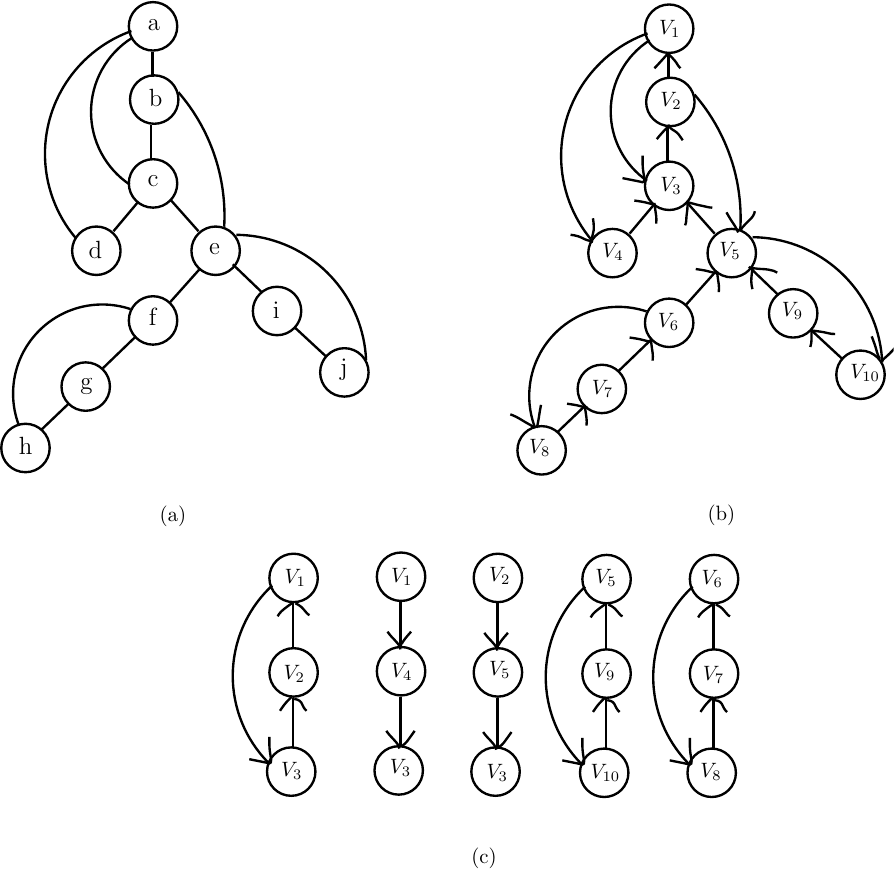}
\end{center}
\caption{Illustration of Chain Decomposition. (a) An input graph $G$. (b) A DFS traversal of $G$ and the resulting edge-orientation along with DFIs. (c) A chain decomposition $D$ of $G$. The chains $D_2$ and $D_3$ are paths and rest of them are cycles. The edge $(V_5,V_6)$ is bridge as it is not contained in any chain. $V_5$ and $V_6$ are cut vertices.}
\end{figure}

First we mark all the vertices as unvisited. Then we visit every vertex starting at $r$ in increasing order of DFI, and do the following.
For every back edge $e$ that originates at $v$, we traverse a directed cycle or a path.
This begins with $v$ and the back edge $e$ and proceeds along the tree towards the root and stops at 
the first visited vertex or the root. During this step, we mark every encountered vertex visited. This forms the first chain. Then we proceed with the next back edge at $v$, if any, or move towards the next $v$ in increasing DFI order and continue the process.
Let $D$ be the collection of all such cycles and paths. Notice
that, the cardinality of this set is exactly the same as the number of
back edges in the DFS tree as each back edge contributes to one cycle or
a path. Also as initially every vertex is unvisited, the first chain would be a cycle as it would end in the starting vertex. 
Schmidt proved the following theorem.

\begin{theorem} \cite{Schmidt13} 
\label{2ec} 
Let $D$ be a chain decomposition of a connected graph $G(V,E)$. Then $G$ is 2-edge-connected if and if the chains in $D$ partition $E$. Also, $G$ is 2-vertex-connected if and if $\delta(G) \geq 2$ (where $\delta(G)$ denotes the minimum degree of $G$) and $D_1$ is the only cycle in the set $D$ where $D_1$ is the  first chain in the decomposition. An edge $e$ in $G$ is bridge if and if $e$ is not contained in any chain in $D$. A vertex $v$ in $G$ is a cut vertex if and if $v$ is the first vertex of a cycle in $ D \setminus D_1$. 
\end{theorem}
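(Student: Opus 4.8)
The plan is to recognize the chain decomposition $D$ as an \emph{ear decomposition} of the subgraph it covers, and then read off all four assertions from the classical ear-decomposition theory (Whitney) together with the standard DFS characterization of bridges. First I would record the structural facts that follow directly from the construction: the chains are pairwise edge-disjoint (each edge is marked at most once), every back edge is the first edge of exactly one chain (so $|D|$ equals the number of back edges), the first chain $D_1$ is a cycle (it starts at the smallest-DFI vertex carrying a back edge and, since no vertex is yet visited, can only stop when it returns to that start vertex), and for every $i \ge 2$ the two endpoints of $D_i$ already lie in $D_1 \cup \cdots \cup D_{i-1}$ while all its internal vertices are discovered fresh along the chain. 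These facts say precisely that $\bigcup_i D_i$ is built by starting from a cycle and successively appending ears — open ears for path chains and closed ears for cycle chains.

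For the edge statements I would show that an edge lies in some chain if and only if it is not a bridge. Every back edge starts a chain, and a back edge is never a bridge; a tree edge $e=(c,\mathrm{parent}(c))$ is traversed by a chain exactly when some back edge joins a descendant of $c$ to a proper ancestor of $c$, which by the usual DFS characterization is exactly the condition for $e$ \emph{not} to be a bridge. Hence the chains partition $E$ iff $G$ has no bridge, which gives the bridge characterization immediately. Combining this with the ear structure above: if the chains partition $E$ then $D$ is an ear decomposition of all of $G$, so $G$ is $2$-edge-connected; conversely a $2$-edge-connected graph has no bridge, so every edge is covered and the chains partition $E$. This settles the $2$-edge-connectivity statement.

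For the vertex statements I would analyse the closed ears, proving directly the ear lemma that appending an open ear preserves $2$-vertex-connectivity whereas appending a closed ear introduces a cut vertex at its attachment point. Concretely, if $D_i$ with $i\ge 2$ is a cycle with first vertex $v$, then its internal vertices attach to $D_1\cup\cdots\cup D_{i-1}$ only through $v$ (the chain leaves $v$, passes through newly discovered vertices, and returns to $v$), so deleting $v$ separates these internal vertices from the rest and $v$ is a cut vertex; conversely I would argue that a cut vertex must arise as such a first vertex. Thus, assuming $\delta(G)\ge 2$ to exclude the degenerate cut vertices created by pendant (bridge) edges, $G$ has a cut vertex iff some $D_i$ with $i\ge 2$ is a cycle, which is exactly Whitney's open-ear criterion for $2$-vertex-connectivity and yields both the cut-vertex characterization and the $2$-vertex-connectivity statement.

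The hard part will be the two \textbf{completeness} (``only if'') directions — proving that every bridge is missed by all chains, and that every cut vertex genuinely appears as the first vertex of a non-initial cycle. The cut-vertex direction is the most delicate: one must track how the interiors of later chains hang off the already-constructed subgraph and carefully separate the true closed-ear cut vertices from those induced by bridges, which is precisely why the side condition $\delta(G)\ge 2$ must be imposed. Establishing this exact correspondence between closed ears and cut vertices, rather than the easy ``if'' direction, is where the real work lies.
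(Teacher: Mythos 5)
The first thing to note is that the paper never proves this statement: Theorem~\ref{2ec} is quoted from Schmidt~\cite{Schmidt13} and used as a black box, so the only argument to compare yours against is Schmidt's original one, which indeed runs along the same ear-decomposition-plus-Whitney lines you chose. The strategy is therefore sound in outline, but your write-up has concrete gaps, and one part of the statement you set out to prove cannot be proved at all, because as transcribed in this paper it is false.

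Two gaps in the parts that are true. First, your foundational ``structural fact'' that for every $i\ge 2$ both endpoints of $D_i$ lie in $D_1\cup\dots\cup D_{i-1}$ does not follow from the construction and is false in general: take two triangles joined by a bridge. There the decomposition is two vertex-disjoint cycles, so the second chain attaches to nothing built before it. That fact holds only inside a $2$-edge-connected component, or once one already knows the chains cover every edge, so it is a conclusion to be earned, not a starting point. Second, the covering lemma on which the bridge and $2$-edge-connectivity claims rest --- that a tree edge $(c,\mathrm{parent}(c))$ spanned by some back edge is actually traversed by some chain --- is asserted, not proved, and it is not automatic: the chain generated by a spanning back edge may stop at an already-visited vertex strictly below $c$. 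One has to take the first spanning back edge in processing order and use the DFI ordering to rule out an early stop. Relatedly, you flag ``every bridge is missed by all chains'' as a hard completeness direction, but that is the easy direction (every chain edge lies on the fundamental cycle of its back edge); the hard direction is that every non-bridge edge is covered.

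The fatal problem is the cut-vertex claim. In the triangle--bridge--triangle example, $\delta(G)=2$ and both endpoints of the bridge are cut vertices, yet only one of them is the first vertex of a cycle in $D\setminus D_1$. So the equivalence ``$v$ is a cut vertex iff $v$ is the first vertex of a cycle in $D\setminus D_1$'' is false, and your plan of using $\delta(G)\ge 2$ to ``exclude the degenerate cut vertices created by pendant (bridge) edges'' cannot close this: bridges need not be pendant, and $\delta\ge 2$ does not exclude them. Schmidt's actual lemma reads ``$v$ is a cut vertex iff $v$ is \emph{incident to a bridge} or $v$ is the first vertex of a cycle in $D\setminus D_1$''; the paper dropped the bridge clause. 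Hence the completeness direction you call delicate is not delicate but impossible as stated; you should prove Schmidt's corrected statement (or restrict to bridgeless graphs). Note that the $2$-vertex-connectivity equivalence itself does survive, since $\delta\ge 2$ together with a bridge always forces a second cycle in $D$ (each side of the bridge has minimum degree essentially $2$, hence contains a back edge, and no chain crosses the bridge).
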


The algorithm (the tests in Theorems~\ref{2ec}) can be implemented easily in $O(m+n)$ time using $O(m+n)$ words as we can store the DFIs
and entire chain decomposition $D$. 
To reduce the space to $O(m+n)$ bits, we first perform a depth first
search of the graph $G$ (as mentioned in Theorem \ref{undirected1}) and recall that at the
end of the DFS procedure, we have the color array $C$ with all colors black and the array $E$ which encodes the DFS tree. Here for a tree edge $(i,j)$ where $i$ is closer to the root, the position corresponding to the edge $(i,j)$ is 
marked $1$ and that corresponding to $(j,i)$ is marked $0$, and the backedges are marked $0$. To implement the chain decomposition, we do not have space to store the chains or the DFS indices. To handle the latter (DFI), we (re)run DFS and then use Schmidt's algorithm along with DFS in an interleaved way. Towards the end, we recolor all the vertices to white. To handle the former, we use two more arrays, one to mark the vertices visited in the chain decomposition, called {\it visited} and another array $M$, to mark the edges visited during the chain decomposition. The array $M$ has size $(n+2m)$ bits, and it has the same initial 
structure as $E$ i.e. $0$'s separated by $1$'s where $0$'s
denote edges and $1$'s denote vertices. 
The details of forming the chain decomposition and finding all cut vertices and bridges 
using these arrays $O$ (original outdegree encoding), $E$ (the DFS tree), $C$ (color array), $visited$ and $M$ (to mark edges) are explained below.

\begin{proof} {\it of Theorem \ref{dfsapps}.} We explain here the details of forming the chain decomposition and finding all cut vertices and bridges. 
We start at the root vertex $r$, and using the array $E$, find the first `back edge' (non-tree edge) $(r,x)$ to $r$.
This can be found by going to the $r$-th $0$ in $O$ and then to the corresponding position in $E$ that represents the vertex $r$ (note that $E$ has a lot more zeroes, and so we should get to the corresponding $0$ of $r$ in $E$ by first getting to the corresponding position in $O$). If $O$ has $1$s after the corresponding $0$, then we look for the first $0$ after the corresponding position in $E$ to find the back edge (as all the tree edges are marked $1$).
We mark $r$ and $x$ visited (if they were unvisited before) and mark both copies of the edge $(r,x)$ (unlike what we do in the forward step of DFS) using the cross pointer in $M$.
Now to obtain the chain, we need to follow the tree edges from $x$. We use the `backtracking' procedure we used earlier 
for DFS. We look for an (the only) edge marked $1$ in $E$ out of the edges incident on $x$ by scanning the adjacency list, and that gives the parent $y$ of $x$ (Here is where we use the fact we only mark one copy of the edge as we explore the DFS tree.). 

We continue after marking $y$ visited, and the edge $(x,y)$ (both copies) in $M$ until we reach $r$ or a visited vertex when we complete the chain. Now we continue from where we left of in $r$'s neighborhood to look for the next back edge and continue this process. Once we are done with back edges incident on $r$, we need to proceed to the next vertex in DFS order.
As we have not stored the Depth First Indices, we essentially (re)run the DFS using the color array $C$. For this, we flush out the color array to make every vertex white again. 
Note that we don't make any changes to array $E$ and $O$ respectively. As this DFS procedure is deterministic, it will follow exactly the same sequence of paths like before, ultimately leading to the same DFS tree structure, and note that, this structure is already saved in array $E$. 

Clearly, the amount of space taken is $O(m+n)$ bits. 
To analyze the runtime, note that, we first perform a DFS traversal which takes linear time. At the second step, we basically perform one more round of DFS. 
As a visited node is never explored (using the {\it visited} array), the overall runtime is $O(m+n)$. Edge connectivity (Theorem~\ref{2ec}) can easily be checked using the array $M$ once we have the chain decomposition. The bridges are the edges marked $0$ in the array $M$. Cut vertices can be obtained and listed out if and and when we reach the starting vertex while forming a chain, except at the first chain (if exists).
This completes the proof.
\end{proof}

Combining all the main results from this section, we summarize our results in the following theorem below,
\begin{theorem}\label{dfsapps1}
A DFS traversal of an undirected or directed graph $G$ can be performed in $O(m+n)$ time using $O(m + n)$ bits. In the same amount of time and space, given a connected undirected graph $G$, we can perform a chain decomposition of $G$, and using that we can determine whether $G$ is $2$-vertex (and/or edge) connected. If not, in the same amount of time and space, we can compute all the bridges and cut vertices of $G$.
\end{theorem}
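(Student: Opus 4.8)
The plan is to assemble this statement as a consolidation of the results already proved in this section, since each of its constituent claims has been established individually, and the only real work is to verify that the pieces coexist within a single resource budget. First I would dispatch the DFS traversal bound. Theorem~\ref{mixed} already yields a DFS traversal in $O(m+n)$ time using $(2m+3n)+o(m+n)$ bits for directed graphs and $(4m+3n)+o(m+n)$ bits for undirected graphs; both expressions are $O(m+n)$ bits, so the traversal part of the statement (covering both the directed and undirected cases) requires nothing beyond invoking that theorem.

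Next I would handle the chain-decomposition and connectivity part for a connected undirected graph, which is exactly the content of Theorem~\ref{dfsapps}: in $O(m+n)$ time and $O(m+n)$ bits one can compute a chain decomposition, test $2$-vertex and $2$-edge connectivity, and, when these tests fail, report all cut vertices and bridges. The point I would make explicit is that this phase is built directly on top of the DFS machinery rather than recomputed from scratch. It reuses the unary outdegree encoding $O$, the tree-edge marking $E$, and the color array $C$ produced by the DFS of Theorem~\ref{undirected1}, supplemented only by the $visited$ bitvector and the edge-marking array $M$, each of size $O(m+n)$ bits, together with the static rank/select support of Theorem~\ref{staticbit}. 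Thus the DFS phase and the chain-decomposition phase share one workspace of $O(m+n)$ bits, and the total time is the sum of two linear-time passes, which is still $O(m+n)$.

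Since every ingredient is already in hand, I do not expect a genuine obstacle; the main care required is bookkeeping. Specifically, I would confirm two things: that the auxiliary structures introduced for the decomposition ($visited$ and $M$, plus the rank/select indices) fit alongside the DFS arrays inside the combined $O(m+n)$-bit budget, and that recovering the DFI ordering by \emph{rerunning} DFS (instead of storing the indices explicitly) does not inflate the running time beyond linear. Both facts are guaranteed by the constructions underlying Theorems~\ref{mixed} and~\ref{dfsapps} --- in particular, the determinism of DFS ensures the re-executed search reproduces exactly the tree already encoded in $E$ --- so the combined statement follows at once.
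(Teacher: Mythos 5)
Your proposal is correct and takes essentially the same approach as the paper: Theorem~\ref{dfsapps1} appears there purely as a consolidation (``Combining all the main results from this section\ldots''), with the DFS bound supplied by Theorems~\ref{directed1}, \ref{undirected1} and~\ref{mixed}, and the chain decomposition, connectivity tests, and reporting of cut vertices and bridges supplied by the proof of Theorem~\ref{dfsapps}. The bookkeeping points you verify --- the shared $O(m+n)$-bit workspace and the linear-time deterministic rerun of DFS in place of stored DFIs --- are exactly the details the paper's proof of Theorem~\ref{dfsapps} already establishes.
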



In what follows, we show in Section \ref{simp-bicon} how to improve the space bounds of Theorem \ref{dfsapps1} keeping the same running time by applying different bookeeping technique but essentially using the same algorithm itself.  

\section{DFS and applications using $O(n \lg (m/n))$ bits}
\label{simp-bicon}
As mentioned previously, one can easily implement the tests in Theorem~\ref{2ec} in $O(m+n)$ time using $O(m)$ words, by storing the DFIs and the entire chain decomposition, $D$. Theorem \ref{dfsapps1} shows how to perform the tests using $O(m+n)$ bits and $O(m+n)$ time. The central idea there is to maintain the DFS tree using $O(m+n)$ bits using an unary encoding of the degree sequence of the graph. And later, build on top of it another $O(m+n)$ bits structure to perform chain decompositions and the other tests of Schimdt's algorithm. We first show how the space for the DFS tree representation can be improved to $O(n \lg m/n)$ bits. Also note that, all the algorithms from the last section assumes that the input graph must be respresented as adjacency array with cross pointers. Our algorithms in this section also get rid of this assumption. Here we only assume that the input graph is represented as a standard adjacency array. We start by proving the following useful lemma. 

\begin{lemma}\label{lem:adjlist-pointers}
Given the adjacency array representation of an undirected graph $G$ on $n$ vertices with $m$ edges, 
using $O(m)$ time, one can construct an auxiliary structure of size $O(n \lg (m/n))$ bits that can 
store a ``pointer'' into an arbitrary position within the adjacency array of each vertex. Also, updating 
any of these pointers (within the adjacency array) takes $O(1)$ time.
\end{lemma}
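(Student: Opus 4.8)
The plan is to store, for each vertex $v$, its pointer as an \emph{offset} within $v$'s own adjacency sub-array rather than as an absolute position in the length-$2m$ array; this relative encoding is what buys the space. Writing $d_v$ for the degree of $v$, such an offset lies in $\{0,1,\dots,d_v\}$ and hence fits in $b_v := \lceil \lg(d_v+1)\rceil$ bits. First I would bound the total width: $\sum_v b_v \le n + \sum_v \lg(d_v+1)$, and by concavity of $\lg$ (Jensen's inequality) $\sum_v \lg(d_v+1) \le n\lg\!\big((2m+n)/n\big)$, which is $O(n\lg(m/n))$ using the standing assumption $m\ge n-1$. Thus laying the $n$ pointers out back-to-back in a single bit array $P$ costs only $O(n\lg(m/n))$ bits. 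Note also that $b_v=\lceil\lg(d_v+1)\rceil=O(\lg n)$, so any one field fits in a constant number of machine words.

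The only real difficulty is supporting $O(1)$-time access to these \emph{variable-length} fields: to read or write $v$'s field I must know its starting bit-offset $\mathrm{off}(v)=\sum_{u<v} b_u$ and its width $b_v$, yet I cannot afford to store the prefix sums explicitly (that would cost $\Theta(n\lg n)$ bits). To get around this I would introduce an auxiliary bit vector $U$ recording the widths in unary, $U = 1^{b_1}0\,1^{b_2}0\cdots 1^{b_n}0$, whose length is $\sum_v(b_v+1)=O(n\lg(m/n))$, and equip it with the static $\mathrm{rank}/\mathrm{select}$ structure of Theorem~\ref{staticbit} at a lower-order additive cost. Then $\mathrm{off}(v)=\mathrm{select}_0(U,v-1)-(v-1)$ and $b_v=\mathrm{select}_0(U,v)-\mathrm{select}_0(U,v-1)-1$ are both computable in $O(1)$ time, after which reading or overwriting $v$'s field inside $P$ is a constant number of word operations.

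The point that makes updates cheap is that a pointer for $v$ always stays in $\{0,\dots,d_v\}$, so its width $b_v$ never changes; consequently the layout of $P$ and the vector $U$ (hence its select structure) are \emph{static}, and an update merely overwrites a fixed-width field of $P$, giving $O(1)$-time updates with no rebuilding. For construction, each $d_v$ is obtained in $O(1)$ time from the adjacency array (as the gap between consecutive list-start positions), so $U$ is written and its select structure built in $O(|U|)=O(n\lg(m/n))$ time; since $\lg(m/n)\le m/n$ yields $n\lg(m/n)\le m$, this is $O(m)$ overall.

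I expect the main obstacle to be exactly this $O(1)$-access-to-variable-length-fields problem; the unary-plus-select gadget resolves it cleanly \emph{precisely because} the field widths are fixed for the lifetime of the structure, so that a static (constant-time) select suffices and no slower dynamic rank/select is ever required.
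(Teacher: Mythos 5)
Your proposal is correct and matches the paper's own proof in all essentials: the paper likewise packs variable-width pointer fields (of width $\lceil\lg d_i\rceil$ bits per vertex) into a bit array $P$, and locates field boundaries in $O(1)$ time via a unary width-encoding bitvector ($0^{\lceil\lg d_i\rceil-1}1$ per vertex, versus your $1^{b_v}0$) equipped with the static select structure of Theorem~\ref{staticbit}, exploiting exactly the same observation that the widths never change so a static select suffices and updates are plain fixed-width overwrites. The only differences are cosmetic (your delimiter convention and the $+1$ in the field width), plus your slightly more explicit accounting of the $O(m)$ construction time.
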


\begin{proof}
We first scan the adjacency array of each vertex and construct a bitvector $B$ 
as follows: starting with an empty bitvector $B$, for $1 \le i \le n$, if $d_i$ is 
the length of the adjacency array of vertex $v_i$ (i.e., its degree), then we append the string 
$0^{\lceil{\lg d_i}\rceil -1}1$ to $B$. The length of $B$ is $\sum_{i=1}^n \lceil{\lg d_i}\rceil$, 
which is bounded by $O(n \lg (m/n))$. We construct auxiliary structures to support $select$ queries 
on $B$ in constant time, using Theorem~\ref{staticbit}.
We now construct another bitvector $P$ of the same size as $B$, which stores pointers into the adjacency arrays of each vertex. The 
pointer into the adjacency array of vertex $v_i$ is stored using the $\lceil{\lg d_i}\rceil$ bits in $P$ from 
position $select(i-1,B)+1$ to position $select(i,B)$, where $select(0,B)$ is defined to be $0$. Now, using 
select operations on $B$ and using constant time word-level read/write operations, one can access and/or 
modify these pointers in constant time.
\end{proof}

Given that we can maintain such pointers into the lists of every vertex, the following lemma shows that, within the same time and space bounds, we can actually maintain the DFS tree of a given graph $G$. Details are provided below.

\label{sec:biconn-parentprts}
\begin{lemma}\label{lem:parent-pointers}
Given a graph $G$ with $n$ vertices and $m$ edges, in the adjacency array representation in the read-only memory model,
the representation of a DFS tree can be stored using $O(n \lg (m/n))$ additional bits, which can be constructed
on the fly during the DFS algorithm.
\end{lemma}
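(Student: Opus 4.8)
The plan is to represent the DFS tree by storing, for every non-root vertex $v$, a single \emph{parent pointer} $p(v)$ that records the index, \emph{within $v$'s own adjacency array}, of the position at which its DFS-parent occurs. Invoking Lemma~\ref{lem:adjlist-pointers} with one pointer slot per vertex, the whole collection $\{p(v)\}$ occupies $O(n\lg(m/n))$ bits and each $p(v)$ can be read or overwritten in $O(1)$ time. This collection \emph{is} the claimed representation: given any vertex we recover its parent in constant time (read $p(v)$, then look up that entry of $v$'s list), and hence can walk any root-to-vertex tree path; moreover an arbitrary edge $(u,v)$ can be classified as a tree edge or a non-tree (back) edge in $O(1)$ time by testing whether $p(v)$ points at $u$ or $p(u)$ points at $v$. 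The root is marked by a reserved null value.

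To build this on the fly I would run the ordinary three-colour DFS, keeping a colour array $C$ (white/gray/black) in $O(n)$ bits and a \emph{second} instance of the structure of Lemma~\ref{lem:adjlist-pointers} that holds, for every gray vertex, its current scan position in its adjacency array. A forward step at a gray vertex $u$ advances $u$'s scan pointer to the next white neighbour $v$, colours $v$ gray, and sets $p(v)$. A backtrack step, triggered when $u$'s scan pointer runs off the end of its list, colours $u$ black, reads $p(u)$ to obtain the parent $t$, and resumes the scan of $t$ from where it was left (its scan pointer, untouched while $u$'s subtree was being explored, still indicates $u$, so we simply advance past it). All of these structures together use $O(n\lg(m/n))$ additional bits, so the stated space bound holds; after the traversal the scan-position structure may be discarded, leaving only the parent pointers as the persistent tree representation needed downstream.

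For the running time, the scan pointers guarantee that each adjacency list is traversed only once in total over all forward and backtrack steps, contributing $O(m+n)$. The one genuinely non-trivial operation is setting $p(v)$: since we only assume the plain adjacency array (no cross pointers), to record the position of $u$ inside $v$'s list we must locate $u$ there by a linear scan costing $O(d_v)$. Because each vertex is discovered exactly once, these scans sum to $\sum_v O(d_v) = O(m)$, and every pointer read/update is $O(1)$ by Lemma~\ref{lem:adjlist-pointers}, so the construction runs in $O(m+n)$ time.

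The step I expect to be the main obstacle is exactly this absence of cross pointers: I must argue that a single scan-position pointer per vertex is enough to drive the DFS and that the amortised cost of repeatedly hunting for the parent's position stays $O(m)$. The subtle correctness invariants to verify are that a vertex's scan pointer is never modified while its subtree is under exploration, so that upon backtracking it still addresses the child we descended into, and that $p(v)$, once set at discovery time, is never invalidated by later back edges into $v$ from other gray vertices (which cannot re-discover the already-gray $v$). Establishing these two invariants carefully is what turns the informal description above into a proof.
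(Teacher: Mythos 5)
Your proposal is correct and takes essentially the same route as the paper: the paper's proof likewise stores, via Lemma~\ref{lem:adjlist-pointers}, a per-vertex parent pointer into that vertex's own adjacency array, sets it by scanning $v$'s list when the tree edge $(u,v)$ is output, and charges the scans to $O(m)$ total. Your additional machinery for driving the DFS itself (the colour array and a second pointer structure holding each gray vertex's scan position) is not part of the paper's proof of this lemma but matches exactly how the paper later simulates its DFS within these bounds in Theorem~\ref{thm:biconn-parentptrs}, so it is a faithful, slightly more self-contained rendering of the same argument.
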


\begin{proof}
We use the representation of Lemma~\ref{lem:adjlist-pointers} to store {\it parent} pointers into the adjacency array of 
each vertex. In particular, whenever the DFS outputs an edge $(u,v)$, where $u$ is the parent of $v$, we 
scan the adjacency array of $v$ to find $u$ and store a pointer to that position (within the adjacency array of $v$).
The additional time for scanning the adjacency arrays adds upto $O(m)$ which would be subsumed by the running 
time of the DFS algorithm.
\end{proof}

We call the representation of the DFS tree of Lemma~\ref{lem:parent-pointers} as the {\em parent pointer representation}. Now given Lemma \ref{lem:adjlist-pointers} and \ref{lem:parent-pointers}, we can simulate the DFS algorithm of Theorem \ref{undirected1} to obtain an $O(n \lg (m/n))$ bits and $O(m+n)$ time DFS implementation. The proof of Theorem~\ref{dfsapps} then uses another $O(m+n)$ bits to construct the chain decomposition of $G$ and perform the tests as mentioned in Theorem~\ref{2ec}, and we show here how even the space for the construction of a chain decomposition and performing the tests can be improved. We summarize our results in the following theorem below:

\begin{theorem}\label{thm:biconn-parentptrs}
A DFS traversal of an undirected or directed graph $G$ can be performed in $O(m+n)$ time using $O(n \lg (m/n))$ bits of space. In the same amount of time and space, given a connected undirected graph $G$, we can perform a chain decomposition of $G$, and using that we can determine whether $G$ is $2$-vertex (and/or edge) connected. If not, in the same amount of time and space, we can compute and report all the bridges and cut vertices
of $G$.
\end{theorem}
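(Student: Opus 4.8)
The plan is to prove the theorem in two stages: first the bare DFS traversal, and then the chain decomposition together with the biconnectivity and $2$-edge-connectivity tests, each within $O(m+n)$ time and $O(n\lg(m/n))$ bits.

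For the DFS traversal I would simulate the linear-time algorithm of Theorem~\ref{undirected1}, but replace its two $\Theta(m+n)$-bit arrays --- the static degree encoding and the tree-edge marker $E$ --- by the pointer structures of Lemmas~\ref{lem:adjlist-pointers} and~\ref{lem:parent-pointers}. Concretely, I keep a colour array in $O(n)$ bits, one ``current-scan'' pointer per vertex into its adjacency array (Lemma~\ref{lem:adjlist-pointers}) recording how far that vertex's list has been examined, and one parent pointer per vertex (Lemma~\ref{lem:parent-pointers}) recording the position of its DFS parent inside its own list. The forward step is unchanged: from the current vertex advance its scan pointer to the next white neighbour, colour it grey, and set that neighbour's parent pointer. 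The backtrack step, which in Theorem~\ref{undirected1} cost a full scan of $w$'s list to recover the $E$-marked tree edge, now simply follows $w$'s parent pointer in $O(1)$ time directly to the parent and the correct position in the parent's list. All the pointer structures occupy $O(n\lg(m/n))$ bits (which dominates the $O(n)$-bit colour array), and the total time is $O(m+n)$; the same scheme, reading the out- and in-lists, handles directed graphs.

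For the applications the obstacle is space. The proof of Theorem~\ref{dfsapps} used an $(n+2m)$-bit array $M$ to mark \emph{every} edge visited by a chain, which is $\Theta(m)$ bits and far too much here. The key observation that avoids $M$ is that every back edge starts its own chain and is therefore automatically covered, so a \emph{bridge can only be a tree edge}; since there are merely $n-1$ tree edges, a single bit per tree edge --- indexed by the child endpoint, which each tree edge determines uniquely --- records coverage in just $O(n)$ bits. Similarly I keep an $O(n)$-bit $visited$ array (for the ``stop at the first visited vertex'' rule) and an $O(n)$-bit cut-vertex array. To process the vertices in increasing DFI order without storing the DFIs, I re-run the (deterministic) DFS after resetting colours, so that the discovery order reproduces the DFI order; at the moment a vertex $v$ is discovered I examine its incident edges, and a non-tree edge $\{v,w\}$ originates at $v$ exactly when $w$ is still white, because in an undirected DFS tree such a $w$ must be a descendant of $v$ (all grey neighbours being ancestors). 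Each such back edge launches a chain whose upward walk is realised purely by following parent pointers and the $visited$ marks, independently of the DFS colouring.

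Finally, I would read off Schmidt's criteria (Theorem~\ref{2ec}) from this $O(n)$-bit bookkeeping: the tree edges left unmarked in the coverage array are exactly the bridges, whence $G$ is $2$-edge-connected iff no such edge remains; a vertex is flagged a cut vertex when a chain other than the first returns to its own starting vertex (a non-$D_1$ cycle), and combining ``no non-$D_1$ cycle'' with a one-pass check that $\delta(G)\ge 2$ decides $2$-vertex-connectivity. For the running time, each back edge begins one chain and, because a vertex becomes $visited$ the first time a chain climbs through it, each tree edge is traversed by at most one chain; hence the total chain work is $O(m+n)$, matching the DFS cost. I expect the main difficulty to be precisely this $O(n)$-bit reformulation of the edge bookkeeping --- arguing rigorously that tracking only the $n-1$ tree-edge coverage bits and the first-vertex-of-a-cycle events suffices to recover all bridges, cut vertices, and both connectivity tests, and that re-running DFS correctly regenerates the DFI order together with the orientation of every back edge.
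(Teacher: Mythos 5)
Your proposal is correct and follows essentially the same route as the paper's proof: simulate the DFS of Theorems~\ref{directed1} and~\ref{undirected1} using the pointer structures of Lemmas~\ref{lem:adjlist-pointers} and~\ref{lem:parent-pointers}, then run Schmidt's chain decomposition with only $O(n)$-bit bookkeeping (a \emph{visited} array, one coverage bit per tree edge indexed by its child endpoint to detect bridges, and cycle-start flags for cut vertices), exactly as the paper does. The only difference is presentational: you make explicit how the DFI order is regenerated (by deterministically re-running the DFS) and how the origin of each back edge is determined (the other endpoint is still white, hence a descendant), a detail the paper's proof leaves largely implicit.
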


\begin{proof}
Given Lemma~\ref{lem:adjlist-pointers} and~\ref{lem:parent-pointers}, it is easy to verify that we can simulate the DFS algorithms of Theorem~\ref{directed1} and~\ref{undirected1} to obtain an $O(n \lg (m/n))$ bits and $O(m+n)$ time DFS implementation. In what follows we use this DFS algorithm to perform the tests in Theorem~\ref{2ec}. With the help of the parent pointer representation, we can visit every vertex, starting at the root $r$ of the DFS tree, in increasing order of DFI, and enumerate (or traverse through) all the non-tree (back) edges of the graph as required in Schimdt's algorithm as follows: for each node $v$ in DFI order, and for each node $u$ in its adjacency list, we check if $u$ is a parent of $v$. If so, then $(u,v)$ is a tree edge, else it is a back edge. We maintain a bit vector {\it visited} of size $n$, corresponding to the $n$ vertices, initialized to all zeros meaning all the vertices are unvisited at the beginning. We use {\it visited} array to mark vertices visited 
during the chain decomposition. When a new back edge is visited for the first time, the algorithm traverses the path starting with the back edge followed by 
a sequence of tree edges (towards the root) untill it encounters a marked vertex, and also marks all the vertices on this path.
By checking whether the vertices are marked or not, we can also distinguish whether an edge is encountered for the first time 
or has already been processed. Note that this procedure constructs the chains on the fly.

To check whether an edge is a bridge or not, we first note that only the tree edges can be bridges 
(back edges always form a cycle along with some tree edges). Also, from Theorem~\ref{2ec},
it follows that any (tree) edge that is not covered in the chain decomposition algorithm is a bridge.
Thus, to report these, we maintain a bitvector $M$ of length $n$, corresponding to the $n$ vertices, 
initialized to all zeros. Whenever a tree edge $(u,v)$ is traversed during the chain decomposition
algorithm, if $v$ is the child of $u$, then we mark the child node $v$ in the bit vector $M$. After 
reporting all the chains, we scan the bitvector $M$ to find all unmarked vertices $v$ and output
the edges $(u,v)$, where $u$ is the parent of $v$, as bridges. If there are no bridges found in this process, then $G$ is $2$-edge connected. To check whether a vertex is a cut vertex (using the characterization in 
Theorem~\ref{2ec}), we keep track the starting vertex of the current chain (except for the first chain, which is a cycle), 
that is being traversed, and report that vertex as a cut vertex if the current chain is a cycle. If there are no cut vertices found in this process then $G$ is $2$-vertex connected. Otherwise, we keep one more array of size
$n$ bits to mark which vertices are cut vertices. This completes the proof.

\end{proof}

Note that, all of our algorithms in this section don't use cross pointers and hence we can just assume that the input graph $G$ is represented via {\it adjacency array} i.e., given a vertex $v$ and an integer $k$, we can access the $k$-th neighbor of vertex $v$ in constant time. We remark that this input representation was also used in \cite{Chakraborty0S16,KammerKL16} recently to design various other space efficient graph algorithms. 

\section{Conclusions}
\label{concl}
We have provided several implementations of BFS focusing on optimizing space without much degradation in time. In particular with $2n+o(n)$ bits we get an optimal linear time algorithm whereas squeezing space further gives an algorithm with running time $O(m f(n) \lg n)$ where $f(n)$ can be any (extremely slow-growing) function of $n$. One can immediately 
obtain similar time-space tradeoffs for natural applications of BFS including testing whether a graph is bipartite or to obtain all connected components of a graph. To achieve this, we also provide a simple and space efficient implementation of the {\it findany} data structure. This data structure supports in constant time, apart from the standard insert, delete and membership queries, the operations findany and enumerate. Very recently, Poyias et al.~\cite{ppr} considered the problem of compactly representing a rewritable array of bit-strings, and for this, they heavily used our findany data structure. It would be interesting to find other such applications for our data structure. Using the findany data structure, we also provide a space efficient implementation of the {\it decrement} data structure which supports in constant time the decrement and check if any element is zero operations. We use this decrement data structure to design space efficient algorithms for performing topological sort in a directed 
acyclic graph and computing degeneracy ordering of a given undirected graph. For the MST problem, we could reduce the space further to $n+o(n)$ bits. It is an interesting question whether we can perform BFS using $n+o(n)$ bits with a runtime of $O(m \lg^c n)$ for some constant $c$ or even $O(mn)$. 

For DFS, we provide an $O(m+n)$ time and $O(n \lg (m/n))$ space DFS traversal method. For a large class of graphs including planar, bounded degree and bounded treewidth graphs, this gives an $O(n)$ bits and $O(m+n)$ time DFS algorithm, partially answering an open question in~\cite{AsanoIKKOOSTU14}, and improving the result of~\cite{ElmasryHK15}. Within the same time and space bound, we also show how to test biconnectivity and $2$-edge connectivity, obtain cut vertices and bridges, and compute a chain decomposition of a given undirected graph $G$. It is a challenging and interesting open problem whether DFS and all these applications can be performed using $O(m+n)$ time and $O(n)$ bits. See ~\cite{Chakraborty0S16,KammerKL16} for some recent algorithms using $O(n)$ bits for most of the problems considered in this paper and others, albeit with a slightly more than linear running time.\\

{\bf Acknowledgement} We thank Saket Saurabh for suggesting the question that led to results in Section 5. \\

\bibliographystyle{plain}

\end{document}